\DeclareFontFamily{U}{cmllr}{}
\DeclareFontShape{U}{cmllr}{m}{n}{
	<4.25>cmllr5
	<5><6><7><8><8.5><9><9.5>gen*cmllr
    <10->cmllr10}{}
\DeclareFontShape{U}{cmllr}{bx}{n}{
    <2.97498><4.25>cmllbx5
	<5-9>gen*cmllbx
	<10->cmllbx10}{}
\DeclareFontFamily{U}{bbold}{}
\DeclareFontShape{U}{bbold}{m}{n}{
	<2.97498><4.16495><4.25>bbold5
	<5>bbold5
	<6>bbold6
	<7>bbold7
	<8><8.5>bbold8
	<9>bbold9
	<9-11>bbold10
	<11-14>bbold12
	<14->bbold12}{}
\newtheorem{myremark}[theorem]{Remark}
\newcommand{\halfogre}{\mathbf{\Delta^\star}}
\newcommand{\nat}{\mathbb{N}}
\newcommand{\mes}[1]{\vert #1\vert}
\newcommand{\msto}{\to^*}
\newcommand{\dom}{\mathrm{dom}}
\newcommand{\mcup}{\uplus}
\newcommand{\Mfin}[1]{\mathcal{M}_{\mathrm{f}}(#1)}
\newcommand{\FV}{\mathrm{FV}}
\newcommand{\Types}{\mathbb{T}}
\newcommand{\LTypes}{\mathbb{C}}
\newcommand{\Terms}{\Lambda_{+\parallel}}
\newcommand{\Values}{\mathrm{V_{+\parallel}}}
\newcommand{\Var}{\mathrm{Var}}
\newcommand{\lam}{\ensuremath{\lambda}}
\newcommand{\plusL}{\ensuremath{+_\ell}}
\newcommand{\plusR}{\ensuremath{+_r}}
\newcommand{\firstsort}[1][]{parallel-type{#1}}   
\newcommand{\secondsort}[1][]{computational-type{#1}} 
\newcommand{\one}{\ensuremath{\textrm{\bf 1}}}
\newcommand{\Int}[1]{\mathbb{\Lbrack} #1\mathbb{\Rbrack}}
\newcommand{\sqle}{\sqsubseteq}
\newcommand{\ogre}{\mathrm{Y}^{\star}}
\newcommand{\keywords}[1]{\par\addvspace\baselineskip
\noindent\keywordname\enspace\ignorespaces#1}
\begin{document}
\mainmatter  

\title{Call-by-Value Non-determinism\\ in a Linear Logic Type Discipline}
\titlerunning{Call-by-Value Non-determinism in a Linear Logic Type Discipline}

%
%
\author{Alejandro D\'iaz-Caro\inst{1,}\thanks{Partially supported by grants from DIGITEO and R\'egion \^Ile-de-France.}
  \and Giulio Manzonetto\inst{1,2}
  \and Michele Pagani\inst{1,2}}
\institute{Universit\'e Paris 13, Sorbonne Paris Cit\'e, LIPN, F-93430, Villetaneuse, France
  \and
  CNRS, UMR 7030, F-93430, Villetaneuse, France
  }
\authorrunning{D\'iaz-Caro, Manzonetto, Pagani}

\maketitle

\begin{abstract}
We consider the call-by-value $\lambda$-calculus extended with a may-convergent non-deterministic choice and
a must-convergent parallel composition.
Inspired by recent works on the relational semantics of linear logic and non-idempotent intersection types,
we endow this calculus with a type system based on the so-called Girard's second translation of intuitionistic logic into linear logic.
We prove that a term is typable if and only if it is converging, and that its typing tree
carries enough information to give a bound on the length of its lazy call-by-value reduction.
Moreover, when the typing tree is minimal, such a bound becomes the exact length of the reduction.

\keywords{$\lambda$-calculus, linear logic, non-determinism, call-by-value.}
\end{abstract}

%
\section{Introduction}\label{sec:intro}

The intersection type discipline provides logical characterisations of operational properties of $\lambda$-terms,
namely of various notions of termination, like head-, weak- and strong-normalisation (see \cite{CoppoDezani78,Salle80}, and \cite{Krivine90} as a reference).
The basic idea is to look at types as the set of terms having a given computational property ---
 the type $\alpha\cap\beta$ being the set of those terms enjoying both properties $\alpha$ and $\beta$. With this intuition in mind, the intersection is naturally idempotent ($\alpha\cap\alpha=\alpha$).
 
Another way to understand the intersection type discipline is as a deductive system for presenting the
compact elements of a specific reflexive Scott domain (see e.g.~\cite[\S 3.3]{AmadioC98}). The set 
of types assigned to a closed term captures the interpretation of such a term in the associated domain. 
Intersection types are then a powerful tool for enlightening the relations between denotational semantics, syntactical types and computational properties of programs.

Intersection types have been recently revisited in the setting of the relational semantics {\bf Rel} of Linear Logic (LL).  
{\bf Rel} is a semantics providing a more \emph{quantitative} interpretation of the \lam-calculus than Scott domains. Loosely speaking, the relational interpretation of a $\lambda$-term $M$ not only tells us whether $M$ converges on an argument, but in case it does, it also provides information on the number of times $M$ needs to call\footnote{The notion of \emph{calling an argument} should be made precise by specifying an operational semantics, which is usually achieved through an evaluating machine.} its argument to converge.
Just like the intersection type discipline captures Scott domains, \emph{non-idempotent} intersection type systems 
represent relational models.
In this framework the type $\alpha_1\cap\cdots\cap\alpha_k$ may be more accurately represented as the finite multiset $[\alpha_1,\dots,\alpha_k]$.
The lack of idempotency is the key ingredient to model the resource sensitiveness of $\mathbf{Rel}$~---
while in the usual systems $M : \alpha\cap \beta$ stands for ``$M$ can be used either as data of type $\alpha$ or as data of type $\beta$'',
when the intersection is not idempotent the meaning of $M : [\alpha,\beta]$ becomes ``$M$ will be called \emph{once} as data of type $\alpha$ and \emph{once} as data of type $\beta$''.
Hence, types should no longer be understood as \emph{sets of terms}, but rather as \emph{sets of calls}
 to terms. 

The first intersection type system based on $\mathbf{Rel}$ has been presented in~\cite{deCarvalho}, where de~Carvalho introduced system~R, a type discipline capturing the relational version of Engeler's model. More precisely, he proved that system~R, beyond characterising converging terms, carries information on the evaluation sequence as well ---
the size of a derivation tree typing a term is a bound on the number of steps needed to reach a normal form.
Similar results are obtained in~\cite{BernadetL11} for a variant of system~R characterising strong normalisation and giving a bound to the longest $\beta$-reduction sequence. 
More recently, Ehrhard introduced a non-idempotent intersection type system characterising the convergence in the call-by-value \lam-calculus~\cite{Ehrhard12}.
Also in this case, the size of a derivation tree bounds the length of the lazy (i.e.\ no evaluation under $\lambda$'s) call-by-value $\beta$-reduction sequence. 
Our goal is to extend Ehrhard's system with non-determinism. 

Our starting point is \cite{BucciarelliEM12}, where it is shown that the relational model $\mathcal D$ of the call-by-name \lam-calculus provides a natural interpretation of both may and must non-determinism. Since {\bf Rel} interprets \lam-terms as relations, the
\emph{may}-convergent non-deterministic choice 
can be expressed in the model as the set-theoretical union. The \emph{must}-convergent parallel composition, instead, is interpreted by using the operation $\mathcal{D} \otimes \mathcal{D}\multimap\mathcal{D}$ obtained by combining the mix rule $\mathcal{D}\otimes \mathcal{D}\multimap \mathcal{D}\parr\mathcal{D}$ with the contraction rule $\mathcal{D}\parr\mathcal{D}\multimap \mathcal{D}$, this latter holding since the call-by-name model $\mathcal{D}$ has shape $\wn A$ for $A = \mathcal{D}^\nat\multimap\perp$.
 We will show that the same principle (\emph{may}-convergence as \emph{union} of interpretations and \emph{must}-convergence as \emph{mix} rule plus contraction) still works in the call-by-value setting.

Ehrhard's call-by-value type system is based on the so-called ``second Girard's translation'' of intuitionistic logic into LL \cite{Girard87,MaraistOTW99}. The translation of a type $\alpha$ is actually given by two mutually defined mappings ($\alpha\mapsto \alpha^v$ and $\alpha\mapsto \alpha^c$) reflecting the two sorts (\emph{values} and \emph{computations}) at the basis of the call-by-value $\lambda$-calculus:
\begin{align*}
\iota^v&=\iota,&(\alpha\rightarrow\beta)^v&=\alpha^c\multimap\beta^c,&\alpha^c&=\oc\alpha^v,
\end{align*}
where $\iota$ is an atom. Hence, the relational model described by Ehrhard's typing system yields a solution to the equation $\mathcal{V}\simeq\oc \mathcal{V} \multimap\oc \mathcal{V}$ in $\mathbf{Rel}$. 
Since in this semantics $\multimap$ is interpreted by the cartesian product and $\oc$ by finite multisets,
a functional type for a value in this system is a pair $(p,q)$ of types for computations, and a type for a computation is a multiset $[\alpha_1,\dots,\alpha_n]$ of value types
(representing $n$ calls to a single value that must behave as $\alpha_1, \dots, \alpha_n$).

In order to deal with the must non-determinism, namely the parallel composition, we must add to the translation considered by Ehrhard a further exponential level, called here the \emph{parallel sort}:
\begin{align}\label{eq:intro_types}
\iota^v&=\iota,&(\alpha\rightarrow\beta)^v&=\alpha^c\multimap\beta^{\parallel},&\alpha^c&=\oc\alpha^v,&\alpha^{\parallel}&=\wn\alpha^c.
\end{align}
This translation enjoys the nice property of mapping the call-by-value $\lambda$-calculus into the polarised fragment of LL, as described by Laurent in \cite{LaurentPhD}. 
Then, our typing system is describing an object in $\mathbf{Rel}$ satisfying the equation $\mathcal{V}\simeq{\oc \mathcal{V} \multimap\wn\oc \mathcal{V}}$, where the $\wn$ connective is interpreted by the finite multiset operator. 
In this setting a value type is a pair $(p,[q_1,\dots,q_n])$ of a computational type $p$ and a parallel type, that is a multiset of computations $q_1,\dots, q_n$. 
Intuitively, a value of that type needs a computation of type $p$ to create a parallel composition of $n$ computations of types $q_1, \dots, q_n$, respectively.  
Notice that, following \cite{BucciarelliEM12}, the composition of the mix rule and the contraction one yields an operation 
$\wn\oc \mathcal{V}\otimes\wn\oc \mathcal{V}\multimap\wn\oc \mathcal{V}$  which is used to interpret the parallel composition.

To avoid a clumsy notation with multisets of multisets, we prefer to denote a $\oc$-multiset $[\alpha_1,\dots, \alpha_m]$ (the type of a computation) 
with the linear logic multiplicative conjunction $\alpha_1\otimes\dots\otimes\alpha_m$, 
a $\wn$-multiset $[q_1,\dots,q_n]$ (the type of a parallel composition of computations)  with the multiplicative disjunction $q_1\parr\dots\parr q_n$, 
and finally a pair $(p,[q_1,\dots,q_n])$ with the linear implication $p\multimap(q_1\parr\dots\parr q_n)$. 
Such a notation stresses the fact that the non-idempotent intersection type systems issued from $\mathbf{Rel}$ are essentially contained in the multiplicative fragment of 
LL (modulo the associativity, commutativity and neutrality equivalences).

\medskip
{\bf Contents.}
Several non-deterministic extensions of the \lam-calculus have been proposed in the literature, 
both in the call-by-name (e.g.\ \cite{BucciarelliEM12,DezanidP96}) and in the call-by-value setting (e.g.\ \cite{BoudolIC94,DezaniLP98}). 
In the present paper we focus on the call-by-value \lam-calculus, first introduced in \cite{PlotkinTCS75}, endowed with two binary operators $+$ and $\parallel$
representing non-deterministic choice and parallel composition, respectively. 
The resulting calculus, denoted here $\Terms$, is quite standard and its operational semantics is given in Section~\ref{sec:cbv-nd-machine}
through a machine performing lazy call-by-value reduction.
Following \cite{BucciarelliEM12}, we model non-deterministic choice as \emph{may} non-determinism and 
parallel composition as \emph{must}. 
This is reflected in our reduction and in our notion of convergence.
Indeed, every time the machine encounters $M + N$ in active position it actually performs a choice, while encountering $M\parallel N$ 
it interleaves reductions in $M$ and in $N$; finally a term $M$ converges when there is a reduction of the machine from $M$ to a normal form.

Section~\ref{sect:types} is devoted to provide the type discipline for $\Terms$, based on the multiplicative fragment of LL (as discussed above),
and to define a measure $\mes{\cdot}$ associating a number with every type derivation.
Such a measure ``extracts'' from the information present in the typing tree of a term, a bound on the length of its evaluation.
In Section~\ref{sec:properties} we show that our type system satisfies good properties like subject reduction and expansion.
We also prove that the measure associated with the typing tree of a term decreases by 1 at every reduction step,
giving thus a proof of weak normalisation in $\omega$ for typable terms.
From these properties it ensues directly that a term is typable if and only if it converges.
Moreover, thanks to the resource consciousness of our type system, we are able to strengthen such a result ---
we prove that, whenever $M$ converges, there is a type derivation $\vdash M : \alpha$ (with $\alpha$ satisfying a suitable minimality condition) such that the associated measure provides the exact number of steps reducing $M$ to a normal form.

Finally, in Section~\ref{sec:NoFA} we discuss the properties of the model in $\mathbf{Rel}$ underlying our system.
As expected, the interpretation turns out to be adequate, i.e.\ a term converges if and only if its interpretation is non-empty.
On the other hand such a model is not fully abstract --- there are terms having different interpretations and that cannot be (semi-)separated
using applicative contexts. 
Our counterexample does not rely on the presence of  $+$ and $\parallel$~.

%
\section{The call-by-value non-deterministic machine}\label{sec:cbv-nd-machine}

We consider the call-by-value $\lambda$-calculus \cite{PlotkinTCS75}, extended with non-deterministic and parallel operators in the spirit of \cite{BucciarelliEM12}.
The set $\Terms$ of \emph{terms} and the set $\Values$ of \emph{values} are defined by mutual induction as follows
(where $x$ ranges over a countable set $\Var$ of variables):
$$
\begin{array}[t]{l@{\hspace{1.5cm}}r@{\ ::=\quad}l@{\hspace{1.5cm}}l}
      \textrm{\itshape Terms:} & M,N,P,Q & V \mid  MN \mid M+N \mid M\parallel N&\Terms\\
      \textrm{\itshape Values:} & V & x\mid \lambda x.M&\Values\\
\end{array}
$$
Intuitively, $M + N$ denotes the \emph{non-deterministic choice} between $M$ and $N$, while $M\parallel N$ stands for their parallel composition. 
Such operators are not required to be associative nor commutative. 
As usual, we suppose that application associates to the left and \lam-abstraction to the right.
Moreover, to lighten the notation, we assume that application and \lam-abstraction take precedence over $+$ and $\parallel\ $.


The \emph{$\alpha$-conversion} and the set $\FV(M)$ of \emph{free variables of $M$} are defined as usual in \lam-calculus \cite[\S2.1]{Bare}.
A term $M$ is \emph{closed} whenever $\FV(M) = \emptyset$.

Given $M\in\Terms$ and $V\in\Values$, we denote by $M[V/x]$ the term obtained by simultaneously  substituting the value $V$ for all free occurrences of $x$ in $M$, subject to the usual proviso about renaming bound variables in $M$ to avoid capture of free variables in $V$.
Hereafter terms are considered up to $\alpha$-conversion.

\begin{figure}[t]
  \begin{tabular}{p{4.8cm}p{3.5cm}p{4.3cm}}
  \emph{$\beta_v$-reduction} &  \emph{$+$-reductions} & \emph{$\parallel$-reductions}\\[0.2ex]
  $(\lambda x.M)V\to M[V/x]$
  &
  $M+N\to M$

  $M+N\to N$
  &
  $(M\parallel N)P\to MP\parallel NP$

  $V(M\parallel N)\to VM\parallel VN$
  \end{tabular}

  ~{\small \emph{Contextual rules}\vspace{-0.5ex}
  $$\begin{array}{c@{\hspace{0.4cm}}c@{\hspace{0.4cm}}c@{\hspace{0.4cm}}c}
    \prooftree M\to M'
    \justifies M\parallel N\to M'\parallel N
    \endprooftree
	&
    \prooftree N\to N'
    \justifies M\parallel N\to M\parallel N'
    \endprooftree
	&
    \prooftree M\to M'\hspace{0.3cm}(*)
    \justifies MN\to M'N

    \endprooftree
	&
    \prooftree M\to M'\hspace{0.3cm}(*)
    \justifies VM\to VM'
    \endprooftree
    \end{array}$$
    }
    \vspace{-1ex}
  \caption{Reduction semantics for $\Terms$. The condition $(*)$ stands for ``$M\neq P\parallel Q$''.}
  \label{fig:Machine}
\end{figure}

\begin{definition}[Operational semantics]
The operational semantics of $\Terms$ is given in Figure~\ref{fig:Machine}.
We denote by $\msto$ the transitive and reflexive closure of $\to$.
\end{definition}

The side condition $(\ast)$ on the context rules for the application avoids critical pairs with the $\parallel$-rules: this is not actually needed but it simplifies some proofs. 
A term $M$ is called a \emph{normal form} if there is no $N\in\Terms$ such that $M\to N$.
In particular, all (parallel compositions of) values are normal forms.
Note that when $M$ is closed then either it is a parallel composition of values or it reduces.
\begin{definition}\label{def:converging} A closed term $M\in\Terms$ \emph{converges} if and only if there exists a reduction $M\msto V_1\parallel \cdots\parallel V_n$ for some $V_i\in\Values$.
\end{definition}
The intuitive idea underlying the above notion of convergence is the following:
\begin{itemize}
\item The non-deterministic choice $M + N$ is treated as \emph{may}-convergent, either of the alternatives may be chosen during the reduction and the sum converges
	if either $M$ or $N$ does.
\item The parallel composition $M\parallel N$ is modelled as \emph{must}-convergent, the reduction forks and the parallel composition converges if both $M$ and $N$ do.
\end{itemize}

Let us provide some examples. 
We set $\mathbf{I} = \lam x.x$, $\mathbf\Delta=\lambda x.xx$ and we denote by $\mathbf\Omega$ the paradigmatic non-converging term $\mathbf\Delta\mathbf\Delta$, 
which reduces to itself as $\mathbf\Delta$ is a value. The reduction is \emph{lazy}, i.e.\ it does not reduce under abstractions, 
so for example $\lambda y.\mathbf\Omega$ is a normal form. 
In fact, when considering closed terms, the parallel compositions of values are exactly the normal forms, thus justifying Definition~\ref{def:converging}. 
We would like to stress that our system is designed in such a way that a parallel composition of values is not a value. 
As a consequence, the term $P=\lambda k.\mathbf\Delta\parallel\mathbf\Delta$ is not a value, so the term $(\lambda x.x\mathbf Ix)P$ is converging. 
Indeed, it reduces to $(\lambda x.x\mathbf{I}x)(\lam k.\mathbf{\Delta})\parallel(\lambda x.x\mathbf{I}x)\mathbf{\Delta}\to^\ast\mathbf\Delta\parallel\mathbf\Delta$. 
Notice that, if we consider $P$ as a value, then $(\lambda x.x\mathbf{I}x)P$ would diverge since it would reduce to $P\mathbf IP\to^\ast(\mathbf\Delta \parallel\mathbf I)P\to^\ast\mathbf\Delta P\parallel P$ and one can check easily that $\mathbf\Delta P$ diverges.

The presence of the non-deterministic choice $+$ enlightens a typical feature of the call-by-value $\lambda$-calculus: 
application is bilinear (i.e.\ it commutes with $+$) while abstraction is not linear. 
Indeed, one can prove that $(M+M')(N+N')$ and $MN+MN'+M'N+M'N'$ are operationally indistinguishable, 
while $\lambda x.(M+N)$ and $\lambda x.M+\lambda x. N$, in general, are not. 
For example, take $S=\lambda x.(x+\mathbf I)$, $S'=\lambda x.x+\lambda x.\mathbf I$, $E_{\mathbf I}=\lambda x.\mathbf I$, $E_{\mathbf \Omega}=\lambda x.\mathbf \Omega$, and $F=\lambda b.bE_{\mathbf\Omega}(bE_{\mathbf I}E_{\mathbf \Omega})\mathbf I$.
Now observe that $FS$ is converging to the value $\mathbf I$, while $FS'$ diverges. Indeed, remarking that $SE_{\mathbf I}E_{\mathbf\Omega}$ reduces non-deterministically to $\mathbf I$ and to $E_{\mathbf\Omega}$, we have:
\begin{center}
\begin{tikzpicture}
\node (one) at (0,0) {$FS$};
\node (two) at (1.85,0) {$SE_{\mathbf\Omega}(SE_{\mathbf I}E_{\mathbf \Omega})\mathbf I$};
\node (twotwo) at (4.9,0) {$(E_{\mathbf\Omega}+\mathbf I)(SE_{\mathbf I}E_{\mathbf \Omega})\mathbf I$};
	\node (threeup) at (8,1) {$E_{\mathbf\Omega}(SE_{\mathbf I}E_{\mathbf \Omega})\mathbf I$};
		\node (threeupup) at (10.3,1.5) {$E_{\mathbf\Omega}\mathbf I\mathbf I$};
		\node (threeupdown) at (10.3,.5) {$E_{\mathbf\Omega}E_{\mathbf\Omega}\mathbf I$};
	\node (fourup) at (11.5,1) {$\mathbf\Omega\mathbf I$};	
	\node (threedown) at (8,-1) {$\mathbf I(SE_{\mathbf I}E_{\mathbf \Omega})\mathbf I$};
		\node (threedownup) at (10,-.5) {$\mathbf I\mathbf I\mathbf I$};
		\node (fourdownup) at (11,-.5) {$\mathbf I$};
		\node (threedowndown) at (10,-1.5) {$\mathbf IE_{\mathbf \Omega}\mathbf I$};
		\node (fourdowndown) at (11,-1.5) {$\mathbf \Omega$};	
\draw [->] (one.east) -- (two.west);
\draw [->] (two.east) -- (twotwo.west);
\draw [->] (twotwo.east) -- (threeup.west);
	\draw [->, shorten >=3pt] (threeup.east) -- (threeupup.west) node [above] {$\ast$};
		\draw [->, shorten >=2pt] (threeupup.east) -- (fourup.west);
	\draw [->, shorten >=3pt] (threeup.east) -- (threeupdown.west) node [above] {~$\ast$};
		\draw [->, shorten >=2pt] (threeupdown.east) -- (fourup.west);
\draw [->] (twotwo.east) -- (threedown.west);
	\draw [->, shorten >=3pt] (threedown.east) -- (threedownup.west) node [above] {$\ast$};
		\draw [->] (threedownup.east) -- (fourdownup.west) node [above] {$\ast$};
	\draw [->, shorten >=3pt] (threedown.east) -- (threedowndown.west) node [above] {$\ast$};
		\draw [->] (threedowndown.east) -- (fourdowndown.west) node [above] {$\ast$};
\end{tikzpicture}
\end{center}
while $FS'$ has two reducts, either $F\mathbf I$ reducing to $\mathbf\Omega\mathbf I$, or $FE_{\mathbf I}$
reducing to $\mathbf\Omega$.

Finally, we give two examples mixing $+$ and $\parallel$~. The term $(\lambda x.(x\parallel x))(V+V')$ converges either to $V\parallel V$ or to $V'\parallel V'$, while the term $(\lambda x.(x+x))(V\parallel V')$ converges to $V\parallel V'$, only. 

%
\section{Linear Logic Based Type System}\label{sect:types}

In this section we introduce our type system based on linear logic.
The set $\Types$ of \emph{(parallel) types} and the set $\LTypes$ of \emph{computational types} are generated by the following grammar:
$$
\begin{array}[t]{l@{\hspace{1.5cm}}r@{\ ::=\quad}l@{\hspace{1.5cm}}l}
      \textrm{\itshape\firstsort{s}:} & \alpha,\beta & \alpha\parr\beta~|~\tau&\Types\\
      \textrm{\itshape\secondsort{s}:} & \tau,\rho & \one~|~\tau\otimes\rho~|~\tau\multimap\alpha&\LTypes\\
\end{array}
$$
For the sake of simplicity, types are considered up to associativity and commutativity of the tensor $\otimes$ and the par $\parr$. 
The type $\one$, which is the only atomic type, represents the empty tensor and
is therefore its neutral element (i.e.~$\tau\otimes\one = \tau$).
Accordingly, we write $\otimes_{i=1}^n\tau_i$ for $\tau_1\otimes\cdots\otimes\tau_n$ when $n\geq 1$, and for $\one$ when $n=0$.
Similarly,  when $n\geq 1$, $\parr_{i=1}^n\alpha_i$ stands for $\alpha_1\parr\cdots\parr\alpha_n$. 

As mentioned in the introduction, $\tau_1\otimes\cdots\otimes\tau_n$ and $\alpha_1\parr\cdots\parr\alpha_k$ are actually notations representing two different kinds of multisets, namely the $\oc$- and $\wn$-multisets (respectively).
Under this correspondence, $\one$ represent the empty $\oc$-multiset.
We do not allow the empty par as it would correspond to an empty sum of terms, that would be delicate to treat operationally (cf.\ \cite{ArrighiDowekRTA08}).

Note that neither $\otimes$ nor $\parr$ are supposed idempotent.

\begin{figure}[!t]
  \begin{center}
   $\begin{array}{c}
   ~\\
    {
    \prooftree
    \justifies x:\tau\vdash x:\tau
    \using ax
    \endprooftree
        \hspace{1cm}
    \prooftree\Delta_i,x:\tau_i\vdash M:\alpha_i\qquad 1\leq i\leq n
    \justifies\bigotimes\limits_{i=1}^n\Delta_i\vdash\lambda x.M:\bigotimes\limits_{i=1}^n(\tau_i\multimap\alpha_i)
    \using\multimap_I\quad n\geq 0
    \endprooftree
    }\\
    \\
    {
    \prooftree\Delta\vdash M:\bigparr\limits_{i=1}^k\bigotimes\limits_{j=1}^{n_i}(\tau_{ij}\multimap\alpha_{ij})\qquad\Gamma_i\vdash N:\bigparr\limits_{j=1}^{n_i}\tau_{ij}\quad 1\leq i\leq k
    \justifies\Delta\otimes\bigotimes\limits_{i=1}^{k}\Gamma_i\vdash MN:\bigparr\limits_{i=1}^k\bigparr\limits_{j=1}^{n_i}\alpha_{ij}
    \using\multimap_E\quad \begin{array}{c}k\geq 1\\ n_i\geq 1\end{array}
    \endprooftree
    }\\
    \\
    \prooftree\Delta\vdash M:\alpha
    \justifies\Delta\vdash M+N:\alpha
    \using \plusL
    \endprooftree
    \hspace{0.5cm}
    \prooftree\Delta\vdash N:\alpha
    \justifies\Delta\vdash M+N:\alpha
    \using \plusR
    \endprooftree
	  \hspace{0.5cm}
    \prooftree\Delta\vdash M:\alpha_1\qquad\Gamma\vdash N:\alpha_2
    \justifies\Delta\otimes\Gamma\vdash M\parallel N:\alpha_1\parr\alpha_2
    \using \parallel_I
    \endprooftree\\
    \end{array}$
  \end{center}
  \caption{Type system: the inference rules.}
  \label{fig:Types}
\end{figure}

\begin{definition}
A \emph{context} $\Gamma$ is a total map from $\Var$ to $\LTypes$, such that $\dom(\Gamma) = \{ x \mid \Gamma(x)\neq\one\}$ is finite. 
The tensor of two contexts $\Gamma$ and $\Delta$, written $\Gamma\otimes\Delta$, is defined pointwise. 
\end{definition}

As a matter of notation, we write $x_1:\tau_1,\dots, x_n:\tau_n$ for the context $\Gamma$ such that $\Gamma(x_i)=\tau_i$ and $\Gamma(y) = \one$ for all $y\notin\vec x$.
The context mapping all variables to $\one$ is denoted by $\emptyset$; note that $\Gamma\otimes\emptyset = \Gamma$.

\begin{definition}
\begin{itemize}
\item
	The \emph{type system} for $\Terms$ is defined in Figure~\ref{fig:Types}.
	\emph{Typing judgements} are of the form $\Gamma\vdash M:\alpha$;
	when $\Gamma = \emptyset$ we simply write $\ \vdash M : \alpha$.
	\emph{Derivation trees} will be denoted by $\pi$.
\item
	A term $M\in\Terms$ is \emph{typable} if there exist $\alpha\in\Types$ and a context $\Gamma$ such that $\Gamma\vdash M : \alpha$.
\end{itemize}
\end{definition}

The rules for typing non-deterministic choice and parallel composition reflect their operational behaviour.
Non-deterministic choice is may-convergent, thus it is enough to ask that one of the terms in a sum is typable;
on the other hand parallel composition is must-convergent, we therefore require that all its components are typable. 
Intuitively, when dealing with closed terms, the $\parr$ operator can be only introduced to type a parallel composition, and gives an account of the number of its components.
In fact, for closed regular \lam-terms, the type system looses the $\parr$-level and 
 collapses to the one presented in~\cite{Ehrhard12}. 

The $\multimap_E$ rule reflects the distribution of the parallel operator over the application. For example, take $M=x\parallel x'$ and $N=y\parallel y'$ in the premises of $\multimap_E$, then we have $k=2$ and $n_1=n_2=2$ so that the type of the term $MN$ is a $\parr$ of four types, which is in accordance with $(x\parallel x')(y\parallel y')\to^\ast(xy\parallel xy')\parallel(x'y\parallel x'y')$.

\begin{myremark}\label{rmk:valuesTop}
 For every $V\in\Values$ we can derive $\vdash V:\one$. Indeed, if $V$ is a variable, then the derivation follows by $ax$; if $V$ is an abstraction, then it follows by $\multimap_I$ using $n=0$.
 As a simple consequence we get $\ \vdash V_1\parallel \cdots\parallel V_k :  \one\parr\cdots\parr\one$ ($k$~times) for all $V_1,
 \dots, V_k\in\Values$.
\end{myremark}

Concerning the possible types of values, the next more general lemma holds.
\begin{lemma}\label{lem:values}
Let $V\in\Values$. If $\Delta\vdash V:\alpha$ then $\alpha\in\LTypes$.
\end{lemma}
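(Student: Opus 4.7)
The proof is a straightforward case analysis on the last inference rule of the derivation $\pi$ concluding $\Delta \vdash V : \alpha$. The key observation is that, by inspecting Figure~\ref{fig:Types}, the only rules whose conclusion can have a value in the subject position are $ax$ and $\multimap_I$. All the other rules conclude with either an application ($\multimap_E$), a non-deterministic choice ($\plusL$, $\plusR$), or a parallel composition ($\parallel_I$), none of which belongs to $\Values$.

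Concretely, I would proceed as follows. First, if the last rule is $ax$, then $V$ is a variable and $\alpha = \tau$ for some $\tau \in \LTypes$, so we are done immediately. Second, if the last rule is $\multimap_I$, then $V$ is an abstraction $\lambda x.M$ and $\alpha = \bigotimes_{i=1}^n (\tau_i \multimap \alpha_i)$; this is a computational type by definition of $\LTypes$, since each $\tau_i \multimap \alpha_i \in \LTypes$ and $\LTypes$ is closed under $\otimes$ (with the convention that the empty tensor for $n = 0$ is $\one \in \LTypes$, as already noted in Remark~\ref{rmk:valuesTop}).

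There is no real obstacle here: the statement is essentially a syntactic invariant of the type system, saying that a $\parr$ at the top of a type can only be produced by the $\parallel_I$ rule (or passed through by $\plusL$, $\plusR$, $\multimap_E$), and none of these applies to values. If a more detailed justification were required, one could note that the grammar strictly separates $\Types$ and $\LTypes$, so what the lemma really asserts is that the top-level connective of $\alpha$ cannot be $\parr$; this falls out of the case analysis above without any induction on the structure of $V$ or on the derivation.
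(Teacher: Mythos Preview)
Your proposal is correct and follows exactly the same approach as the paper: a case analysis on the last rule of the derivation, observing that only $ax$ and $\multimap_I$ can conclude with a value, and in both cases the assigned type is a \secondsort. The paper's own proof is just the one-line version of what you wrote.
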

\begin{proof} A proof of $\Delta\vdash V:\alpha$ ends in either a $ax$ or a $\multimap_I$ rule. In both cases $\alpha$ is a \secondsort. \qed
\end{proof}

To help the reader to get familiar with the type system, we provide some examples of typable and untypable terms.

\begin{example} Recall that $\mathbf{I} = \lam x.x$, $\mathbf{\Delta} = \lam x.xx$ and $\mathbf{\Omega} = \mathbf{\Delta}\mathbf{\Delta}$.
\begin{enumerate}
\item $\vdash \mathbf{I} : \bigotimes_{i=1}^n (\tau_i\multimap \tau_i)$ and $\ \vdash \lam x.\mathbf{I} : \bigotimes_{i=1}^n(\one\multimap\bigotimes_{j=1}^{k_i} (\tau_{ij}\multimap \tau_{ij}))$.
\item $\vdash \mathbf{\Delta} : \bigotimes_{i=1}^n ((\tau_i\multimap \alpha_i)\otimes \tau_i)\multimap\alpha_i$.
\item 
	$\mathbf\Omega$ is not typable. By contradiction, suppose $\vdash\mathbf\Omega:\alpha$. By $(\multimap_E)$ and (2) there is a type $\tau$ such that $\vdash\mathbf{\Delta} : \tau\multimap\alpha$ and $\vdash\mathbf\Delta : \tau$. Let us choose such a $\tau$ with minimal size. Applying (2) to $\vdash\mathbf\Delta : \tau\multimap\alpha$, we get $\tau=(\tau'\multimap\alpha)\otimes\tau'$, from which one can deduce (see Lemma~\ref{lem:intersec}, below) that $\vdash\mathbf\Delta : \tau'\multimap\alpha$ and $\vdash\mathbf\Delta : \tau'$, thus contradicting the minimality of $\tau$. 
\item However, $\vdash \lam x.\mathbf\Omega : \one$, so $\vdash \lam x.\mathbf\Omega + \mathbf\Omega : \one$, but $\lam x.\mathbf\Omega \parallel \mathbf\Omega$ is not typable. 
\item From (1) and (4) we get: $\vdash \mathbf{I} \parallel \lam x.\mathbf\Omega :  (\bigotimes_{i=1}^n (\tau_i\multimap \tau_i))\parr \one$.
\end{enumerate}
\end{example}

We now define a measure associating a natural number with every derivation tree.
In Section~\ref{subsec:SR} we prove that such a measure decreases along the reduction.
In the next definition we follow the notation of Figure~\ref{fig:Types}, 
in particular in the $\multimap_E$-case the parameter $n_i$ refers to the arity of the $\bigparr$ in the conclusion of $\pi_i$.

\begin{definition}
The \emph{measure} $\mes{\pi}$ of a derivation tree $\pi$ is defined inductively as: 
\smallskip 

 \begin{tabular}{p{5cm}p{6.5cm}}
  $\pi=\prooftree\hspace{1cm}
	   \justifies S
	   \using ax
	   \endprooftree$
  &
  $|\pi|=0$
  \\[1.1em]
  $\pi=\prooftree\pi_1\ \cdots\ \pi_n
	   \justifies S
	   \using\multimap_I
	   \endprooftree$
  &
  $|\pi|= \sum_{i=1}^n|\pi_i|$
  \\[1.1em]
  $\pi=\prooftree\pi_0\quad\pi_1\dots\pi_k
	   \justifies S
	   \using\multimap_E\ \begin{array}{c}k\geq 1\\ n_i\geq 1\end{array}
	   \endprooftree$
  &
  $|\pi|=\sum_{i=0}^k|\pi_i|+(\sum_{i=1}^k2n_i)-1$
  \\[1.1em]
  $\pi=\prooftree\pi'
	   \justifies S
	   \using\plusL
	   \endprooftree$
	   \quad or\quad
  $\pi=\prooftree\pi'
	   \justifies S
	   \using\plusR
	   \endprooftree$
  &
  $|\pi|=|\pi'|+1$
  \\[1.1em]
  $\pi=\prooftree\pi_1\qquad\pi_2
	   \justifies S
	   \using \parallel_I
	   \endprooftree$
  &
  $|\pi|=|\pi_1|+|\pi_2|$
 \end{tabular}
 \medskip

Hereafter, we may slightly abuse the notation and write $\pi=\Gamma\vdash M:\alpha$ to refer to a derivation tree $\pi$ ending by the sequent $\Gamma\vdash M:\alpha$.
 \end{definition}

The measure of a derivation only depends on its rules of type $\multimap_E$, $\plusL$ and $\plusR$. 
These are in fact the kinds of rules that can type a redex ($\beta_v$ and $\parallel$ redexes are typed by $\multimap_E$ rules, 
$+$ redexes by $\plusL$, $\plusR$ rules). 
Each occurrence of a $\plusL$ or $\plusR$ rule counts for one, because a $+$-reduction does not create new rules in the derivation typing 
the contractum (see the proof of Theorem~\ref{thm:SR} for more details). 
An occurrence of a $\multimap_E$ counts for the number of ``active'' connectives appearing in the principal premise, i.e.\ the number of the connectives that are 
underlined in the left-most premise of the $\multimap_E$ rule in Figure~\ref{fig:Types}, indeed
$$
 	\underbrace{\sum_{i=1}^kn_i}_{\textrm{$\multimap$'s}}+\underbrace{\sum_{i=1}^k(n_i-1)}_{\textrm{$\otimes$'s}}+\underbrace{(k-1)}_{\textrm{$\parr$'s}}=(\sum_{i=1}^k2n_i)-1.
$$
Such a weight is needed since the $\parallel$-reduction creates two new $\multimap_E$ rules in the derivation typing the contractum. The measure decreases however, since the sum of the weight of the two new rules is less than the weight of the eliminated rule. 
\begin{figure}
\begin{align*}
\pi&=\quad
\prooftree
	\prooftree
		\pi_1=x:\tau\vdash xx:\one
	\quad
		\pi_2=x:\tau\vdash xx: \one
	\justifies \vdash\mathbf{\Delta}:(\tau\multimap\one)\otimes(\tau\multimap\one)
	\using \multimap_I
	\endprooftree
\quad
	\prooftree
	\vdash\mathbf I:\tau\quad\vdash\lambda xy.\mathbf\Omega:\tau
	\justifies
	\vdash\mathbf I\parallel\lambda xy.\mathbf\Omega:\tau\parr\tau
	\using \parallel_I
	\endprooftree
\justifies
\vdash\mathbf\Delta(\mathbf I\parallel\lambda xy.\mathbf\Omega):\one\parr\one
\using \multimap_E
\endprooftree\\[20pt]
\pi'&=\
\prooftree
	\prooftree
		\prooftree
			\pi_1=x:\tau\vdash xx:\one
			\justifies \vdash\mathbf{\Delta}:\tau\multimap\one
			\using \multimap_I
		\endprooftree
		\;
		\vdash\mathbf I:\tau
	\justifies \vdash\mathbf{\Delta}\mathbf I:\one
	\using\multimap_E	
	\endprooftree
\;
	\prooftree
		\prooftree
			\pi_2=x:\tau\vdash xx:\one
			\justifies \vdash\mathbf{\Delta}:\tau\multimap\one
			\using \multimap_I
		\endprooftree
		\;
		\vdash\lambda xy.\mathbf\Omega:\tau
	\justifies \vdash\mathbf{\Delta}(\lambda xy.\mathbf\Omega):\one
	\using\multimap_E	
	\endprooftree
\justifies
\mathbf\Delta\mathbf I\parallel\mathbf\Delta(\lambda xy.\mathbf\Omega):\one\parr\one
\using\parallel_I
\endprooftree
\end{align*}
\caption{Derivation trees typing, respectively, the $\parallel$-redex $\mathbf\Delta(\mathbf I\parallel\lambda xy.\mathbf\Omega)$ and its contractum $\mathbf\Delta\mathbf{I}\parallel \mathbf\Delta(\lambda xy.\mathbf\Omega)$, taking $\tau=(\one\multimap\one)=(\one\multimap\one)\otimes\one$.}\label{fig:example_typing}
\end{figure}

 For example, let us consider the derivation tree $\pi$ in Figure~\ref{fig:example_typing}, which types the $\parallel$-redex $\mathbf\Delta(\mathbf I\parallel\lambda xy.\mathbf\Omega)$ with $\one\parr\one$, and has three $\multimap_E$ rules --- 
 one of weight 1 in each subtree $\pi_1$, $\pi_2,$ and one of weight 3 giving the conclusion, so that $\mes{\pi} = 5$. 
Now, the $\multimap_E$-rule ending $\pi$ splits into two $\multimap_E$-rules in the derivation tree $\pi'$ typing the contractum of $\mathbf\Delta(\mathbf I\parallel\lambda xy.\mathbf\Omega)$, namely $\pi'=~\vdash\mathbf\Delta\mathbf I\parallel\mathbf\Delta(\lambda xy.\mathbf\Omega):\one\parr\one$. 
However, $|\pi'|=|\pi|-1$ since 
the number of the active connectives of the $\multimap_E$-rule concluding $\pi$ is greater than the sum of the number of the active connectives of its ``residuals'' in $\pi'$. 

Finally, note that the term $\mathbf\Delta(\mathbf I\parallel\lambda xy.\mathbf\Omega)$ reduces to the value $\mathbf I\parallel\lambda y.\mathbf\Omega$ in $5=|\pi|$ steps. As we will show in Theorem~\ref{thm:Convergence} this does not happen by chance.
\section{Properties of the Type System}\label{sec:properties}



We prove that the set of types assigned to a term is invariant under~$\to$, in a non-deterministic setting. 
More precisely, Theorem~\ref{thm:SR} states that if $N$ is the contractum of a $\{\beta_v,\parallel\}$-redex in $M$, then any type of $M$ is a type of $N$, and if $N$ and $N'$ are the two possible contracta  of a $+$-redex in $M$, then any type of $M$ is either a type of $N$ or of $N'$ (\emph{subject reduction}). 
On the other hand Theorem~\ref{thm:SE} shows the converse, namely that whenever $M\to N$, any type of $N$ is a type of $M$ (\emph{subject expansion}). 

Moreover, the two theorems combined prove that the measure associated with the typing tree of a term decreases (resp.\ increases) of exactly one unit 
at each typed step of reduction (resp.\ expansion). This is typical of non-idempotent intersection type systems, as discussed in the introduction. 
As a consequence, any typable term $M$ is normalising and the measure of 
specific derivation trees of $M$ gives the length of a converging reduction sequence.

\subsection{Subject reduction}\label{subsec:SR}

In order to prove subject reduction we first need some preliminary lemmas.
Their proofs are lengthy but not difficult, therefore we write explicitly only the most interesting cases.

\begin{lemma}\label{lem:intersec}
We have that $\pi=\Delta\vdash V:\bigotimes_{i=1}^n\tau_i$ if and only if $\Delta=\bigotimes_{i=1}^n\Delta_i$ and
$\pi_i = \Delta_i\vdash V:\tau_i$ for all $i=1,\dots,n$.
Moreover, $\mes{\pi} = \sum_{i=1}^n \mes{\pi_i}$.
\end{lemma}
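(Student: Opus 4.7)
My approach is to proceed by case analysis on whether the value $V$ is a variable or a $\lambda$-abstraction, using that only $ax$ and $\multimap_I$ can conclude a typing judgement for a value (cf.\ Lemma~\ref{lem:values}). Both implications are handled simultaneously by reading the same structural bookkeeping in either direction.

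If $V = x$, then any derivation $\pi$ of $\Delta \vdash x : \alpha$ must be an instance of $ax$, so $\Delta = (x : \alpha)$ and $\mes{\pi} = 0$. For $(\Rightarrow)$ with $\alpha = \bigotimes_{i=1}^n \tau_i$, I set $\Delta_i = (x : \tau_i)$ and let $\pi_i$ be the axiom $x : \tau_i \vdash x : \tau_i$; then pointwise $\bigotimes_i \Delta_i = (x : \bigotimes_i \tau_i) = \Delta$ and $\sum_i \mes{\pi_i} = 0 = \mes{\pi}$. The converse direction simply amalgamates the axioms back into one. The corner case $n = 0$ is harmless: the type becomes $\one$, the context collapses to $\emptyset$, and the axiom $x : \one \vdash x : \one$ coincides with $\emptyset \vdash x : \one$.

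If $V = \lambda x.M$, then $\pi$ ends with a $\multimap_I$ rule having some $m \geq 0$ premises $\varrho_j = \Delta_j', x : \sigma_j \vdash M : \beta_j$ and conclusion $\bigotimes_{j=1}^m (\sigma_j \multimap \beta_j) = \bigotimes_{i=1}^n \tau_i$. Since types are quotiented by associativity, commutativity and $\one$-neutrality of $\otimes$, this equality induces a partition $J_1, \ldots, J_n$ of $\{1, \ldots, m\}$ such that $\tau_i = \bigotimes_{j \in J_i} (\sigma_j \multimap \beta_j)$, with an empty class $J_i$ corresponding to $\tau_i = \one$. For $(\Rightarrow)$, applying $\multimap_I$ to the premises indexed by $J_i$ yields $\pi_i$ with context $\Delta_i = \bigotimes_{j \in J_i} \Delta_j'$, so that $\bigotimes_i \Delta_i = \bigotimes_j \Delta_j' = \Delta$ and $\mes{\pi_i} = \sum_{j \in J_i} \mes{\varrho_j}$, whence $\sum_i \mes{\pi_i} = \sum_j \mes{\varrho_j} = \mes{\pi}$. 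For $(\Leftarrow)$, I gather the $\multimap_I$-premises of all the $\pi_i$ into one single $\multimap_I$ rule, producing $\pi$ with the same context and measure.

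The main subtlety is the partition argument in the abstraction case: since no type of the form $\sigma \multimap \beta$ equals $\one$ or decomposes non-trivially as a $\otimes$, the only freedom in equating two tensors of such types modulo the quotient is a permutation of factors together with insertion or removal of neutral $\one$'s, which is precisely what realises a partition with possibly empty classes. Once this observation is recorded the proof becomes a purely mechanical rearrangement of premises.
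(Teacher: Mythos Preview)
Your proof is correct and follows essentially the same approach as the paper's: a case split on whether $V$ is a variable or an abstraction, and in the abstraction case a partition of the $\multimap_I$-premises according to the tensor decomposition of the conclusion, with the measure equality coming from summing the measures of the regrouped premises. You are somewhat more explicit than the paper about the partition argument and the $n=0$ corner case; the paper only spells out the $(\Rightarrow)$ direction, calls the variable case trivial, and leaves the converse as ``similar''.
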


\begin{proof} We only prove $(\Rightarrow)$, 
the other direction being similar.
Since $V$ is a value, the last rule of $\pi$ is either $ax$ or $\multimap_I$. The first case is trivial. In the second case, $V=\lambda x. M$ and the premises of the $\multimap_I$-rule are $m\geq n$, say $\pi_j'=\Delta_j,x:\rho_j\vdash M:\alpha_j$ for $j\leq m$, and $\tau_1=\bigotimes_{j=1}^{m_1}\rho_j\multimap\alpha_j$ and $\Delta_1=\bigotimes_{j=1}^{m_1}\Delta_j$, \dots, $\tau_n=\bigotimes_{j=m_{n-1}+1}^{m_n}\rho_j\multimap\alpha_j$  and $\Delta_n=\bigotimes_{j=m_{n-1}+1}^{m_n}\Delta_j$, with $m_1+\dots+m_n=m$.

Notice $|\pi|=\sum_{j=1}^m|\pi'_j|$. Then, for every $i\leq n$, a $\multimap_I$-rule with premises $\pi'_{m_{i-1}+1},\dots,\pi'_{m_i}$ yields $\pi_i=\Delta_i\vdash\lambda x.M:\tau_i$, with $|\pi_i|=\sum_{j=m_{i-1}+1}^{m_i}|\pi_i'|$, therefore $|\pi|=\sum_{i=1}^n|\pi_i|$.\qed
\end{proof}

%
\begin{lemma}[Substitution lemma]\label{lem:substitution}
 If $\pi_1=\Delta,x:\tau\vdash M:\alpha$ and $\pi_2=\Gamma\vdash V:\tau$, then there is $\pi_3=\Delta\otimes\Gamma\vdash M[V/x]:\alpha$.
 Moreover $\mes{\pi_3}=\mes{\pi_1} + \mes{\pi_2}$.
\end{lemma}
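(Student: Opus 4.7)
The natural plan is induction on the structure of the derivation $\pi_1$, with a case analysis on its final rule. Lemma~\ref{lem:intersec} will be used repeatedly to split $\pi_2$ whenever $x:\tau$ has to be distributed across several premises.

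In the \emph{base case} ($ax$), either $M=x$, in which case $\pi_1$ has measure $0$, $\Delta=\emptyset$, and $M[V/x]=V$, so one simply takes $\pi_3:=\pi_2$ and the measure equation is immediate; or $M=y\neq x$, in which case the hypothesis on $x$ forces $\tau=\one$. By Lemma~\ref{lem:intersec} with $n=0$ (the empty tensor), $\pi_2$ has $\Gamma=\emptyset$ and $|\pi_2|=0$, so $\pi_3:=\pi_1$ works since $M[V/x]=M$ and $\Delta\otimes\emptyset=\Delta$.

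For the \emph{inductive step}, I would treat each rule uniformly as follows. In the $\multimap_I$ case with $\alpha=\bigotimes_{i=1}^n(\rho_i\multimap\alpha_i)$, the premises are $\pi_{1,i}=\Delta_i,y:\rho_i,x:\tau_i\vdash M':\alpha_i$ where $\Delta=\bigotimes_i\Delta_i$ and $\tau=\bigotimes_i\tau_i$ (with $y\neq x$ up to $\alpha$-conversion); Lemma~\ref{lem:intersec} splits $\pi_2$ into $\pi_{2,i}=\Gamma_i\vdash V:\tau_i$ with $\Gamma=\bigotimes_i\Gamma_i$ and $|\pi_2|=\sum_i|\pi_{2,i}|$; the IH yields $\pi_{3,i}=\Delta_i\otimes\Gamma_i,y:\rho_i\vdash M'[V/x]:\alpha_i$, and a final $\multimap_I$ gives $\pi_3$ of the right measure. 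The $\multimap_E$ case is analogous but more fiddly: the context $x:\tau$ must be split across $\pi_0$ and each $\pi_i$ ($1\le i\le k$), so one first decomposes $\tau=\tau_0\otimes\bigotimes_{i=1}^k\tau_i'$ according to where $x$ occurs, then applies Lemma~\ref{lem:intersec} to $\pi_2$ with this decomposition, and finally applies the IH to each subderivation before reassembling with $\multimap_E$; the measure bookkeeping works because the weight of the concluding $\multimap_E$ rule is unchanged (same $k$ and same $n_i$'s). The cases $\plusL$, $\plusR$, and $\parallel_I$ proceed in the same spirit, the last one again requiring a split of $\pi_2$ via Lemma~\ref{lem:intersec}.

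The main obstacle is exactly the bookkeeping in the multi-premise rules: one must simultaneously split $\Delta$, $\tau$, and the derivation $\pi_2$ in a way that matches how $x$ is distributed among the premises, while preserving the additivity of the measure. Lemma~\ref{lem:intersec} is tailored for this, and the measure invariant holds because substitution does not introduce or remove any $\multimap_E$, $\plusL$, $\plusR$ rule — it only threads $\pi_2$ into the positions where $x$ occurred as an axiom, so all contributions to $|\pi_3|$ are either inherited from $\pi_1$ or from (copies of) $\pi_2$, summing to $|\pi_1|+|\pi_2|$.
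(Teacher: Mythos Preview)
Your proposal is correct and follows essentially the same approach as the paper: structural induction with Lemma~\ref{lem:intersec} used to split $\pi_2$ whenever $\tau$ is distributed among several premises, and the key measure check in the $\multimap_E$ case relying on the fact that $k$ and the $n_i$'s are unchanged. The paper phrases the induction as ``on $M$'' rather than ``on $\pi_1$'', but since the system is syntax-directed these coincide; the paper only spells out the $\multimap_E$ case, whereas you also detail the base case (including the nice use of Lemma~\ref{lem:intersec} with $n=0$ to force $\Gamma=\emptyset$ and $|\pi_2|=0$).
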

\begin{proof}
  By structural induction on $M$. We only treat the most interesting case, namely $M=NP$. 
  In this case, the last rule of $\pi_1$ is a $\multimap_E$-rule with $k+1$ premises, say
   $\pi_1^0=\Delta_0,x:\tau_0\vdash N:\bigparr_{i=1}^k\bigotimes_{j=1}^{n_i}(\rho_{ij}\multimap\alpha_{ij})$,
   and for $i=1,\dots,k$,
   $\pi_{1}^i=\Delta_i,x:\tau_i\vdash P:\bigparr_{j=1}^{n_i}\rho_{ij}$,
   where
   $\Delta=\bigotimes_{i=0}^k\Delta_i$,
   $\tau=\bigotimes_{i=0}^k\tau_i$,
   $\alpha=\bigparr_{i=1}^k\bigparr_{j=1}^{n_i}\alpha_{ij}$ and
   $|\pi_1|=\sum_{i=0}^k|\pi_{1}^i|+(\sum_{i=1}^k2n_i)-1$.
   By Lemma~\ref{lem:intersec}, we can split $\pi_2$ into $k+1$ derivations $\pi_{2}^i=\Gamma_i\vdash V:\tau_i$, for $i= 0,\dots, k$, such that $\Gamma=\bigotimes_{i=0}^k\Gamma_i$ and $|\pi_2|=\sum_{i=0}^k|\pi_{2}^i|$.
   By the induction hypothesis, there are
   $\pi_{3}^0=\Delta_0\otimes\Gamma_0\vdash N[V/x]:\bigparr_{i=1}^k\bigotimes_{j=1}^{n_i}(\rho_{ij}\multimap\alpha_{ij})$,
   with
   $|\pi_{3}^0|=|\pi_{1}^0|+|\pi_{2}^0|$,
   and for $i=1,\dots,k$,
   $\pi_{3}^i=\Delta_i\otimes\Gamma_i\vdash P[V/x]:\bigparr_{j=1}^{n_i}\rho_{ij}$,
   with
   $|\pi_{3}^i|=|\pi_{1}^i|+|\pi_{2}^i|$.
   Hence, by rule $\multimap_E$, we have
   \begin{equation*}
   \pi_3=(\Delta_0\otimes\Gamma_0)\otimes\bigotimes_{i=1}^k(\Delta_i\otimes\Gamma_i) \vdash N[V/x]P[V/x]:\bigparr_{i=1}^k\bigparr_{j=1}^{n_i}\alpha_{ij}
   \end{equation*}
   Notice that
   $(\Delta_0\otimes\Gamma_0)\otimes\bigotimes_{i=1}^k(\Delta_i\otimes\Gamma_i)=\Delta\otimes\Gamma$ and $N[V/x]P[V/x]=(NP)[V/x]$.
   Moreover,
   $|\pi_3|
   =\sum_{i=0}^k|\pi_{3}^i|+(\sum_{i=1}^k2n_i)-1
   =\sum_{i=0}^k(|\pi_{1}^i|+|\pi_{2}^i|)$ $+(\sum_{i=1}^k2n_i)-1
   =(\sum_{i=0}^k|\pi_{1}^i|+(\sum_{i=1}^k2n_i)-1)+\sum_{i=0}^k|\pi_{2}^i|
   =|\pi_1|+|\pi_2|$.\qed
\end{proof}

We now prove the subject reduction property, which ensures that the type is preserved during reduction, while the measure of the typing is strictly decreasing.

As a matter of terminology, we say that a term $M$ \emph{reduces to a term $N$ using $+$-reductions}, if $M\to N$ is derivable as a direct consequence of a $+$-reduction and
(possibly) some contextual rules. 
In the following proof, given a set $S$, we denote by $\sharp S$ its cardinality.


\begin{theorem}[Subject reduction]\label{thm:SR} Let $\pi=\Delta\vdash M:\alpha$.
\begin{itemize}
\item 
	If $M\to N$ without using $+$-reductions, then there is $\pi'=\Delta\vdash N:\alpha$.
\item 
	If $M\to N_1$ and $M\to N_2$ using $+$-reductions, then there is $\pi'$ such as either $\pi'=\Delta\vdash N_1:\alpha$ or $\pi'=\Delta\vdash N_2:\alpha$.
\end{itemize}
Moreover, in both cases we have $|\pi'|=|\pi|-1$.
\end{theorem}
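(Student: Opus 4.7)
I would proceed by induction on the derivation of $M\to N$, with a case analysis on its last rule. The inductive cases (the four contextual rules) are routine: the induction hypothesis applies inside the active subtree of $\pi$, and since the enclosing rule contributes the same weight to both $\pi$ and $\pi'$, the global measure drops by exactly $1$. The side condition $(*)$ on the application rules simplifies this step by ruling out overlap with the $\parallel$-rules. The real work is in the five root reductions.

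For $\beta_v$, i.e.\ $(\lam x.P)V\to P[V/x]$, the derivation $\pi$ ends with $\multimap_E$. Because both $\lam x.P$ and $V$ are values, Lemma~\ref{lem:values} forces $k=1$ and $n_1=1$ in that rule, so that $\pi_0\vdash\lam x.P:\tau\multimap\alpha$ and $\pi_1\vdash V:\tau$. Since $\pi_0$ ends with $\multimap_I$, it contains a subderivation $\pi_0'$ of $x:\tau\vdash P:\alpha$ of the same measure; the Substitution Lemma~\ref{lem:substitution} then yields a $\pi'\vdash P[V/x]:\alpha$ with $|\pi'|=|\pi_0'|+|\pi_1|=|\pi|-1$, the removed $\multimap_E$ having contributed exactly $(2\cdot1)-1=1$. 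For the two $+$-reductions, $\pi$ ends with $\plusL$ or $\plusR$; its unique premise is a derivation of precisely one of the two possible reducts $N_1$ or $N_2$, and by the very definition of the measure has weight $|\pi|-1$, which also accounts for the second bullet of the statement.

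The delicate case is the $\parallel$-reduction $(M\parallel N)P\to MP\parallel NP$. Here $\pi$ ends with $\multimap_E$, whose principal premise $\pi_0$ types $M\parallel N$ and therefore must end with $\parallel_I$, yielding subderivations $\pi_0^M\vdash M:\alpha_M$ and $\pi_0^N\vdash N:\alpha_N$ such that $\alpha_M\parr\alpha_N$ coincides with the principal type $\bigparr_{i=1}^k\beta_i$ demanded by $\multimap_E$, where $\beta_i=\bigotimes_{j=1}^{n_i}(\tau_{ij}\multimap\alpha_{ij})$. Up to associativity and commutativity of $\parr$, this induces a partition $\{1,\dots,k\}=I_M\uplus I_N$ with $I_M,I_N$ both non-empty, $\alpha_M=\bigparr_{i\in I_M}\beta_i$, and $\alpha_N=\bigparr_{i\in I_N}\beta_i$. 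Re-grouping the argument premises $\pi_i\vdash P$ accordingly and applying $\multimap_E$ twice followed by $\parallel_I$ builds the required $\pi'\vdash MP\parallel NP$ with the same overall conclusion type. The key arithmetic is that the single $\multimap_E$ of $\pi$, which contributed $(\sum_{i=1}^k2n_i)-1$, is replaced by two $\multimap_E$'s contributing $(\sum_{i\in I_M}2n_i)-1$ and $(\sum_{i\in I_N}2n_i)-1$, for a net saving of exactly $1$. The symmetric case $V(M\parallel N)\to VM\parallel VN$ is treated analogously, except that one splits the tensor type of the value $V$ across the two sides by means of Lemma~\ref{lem:intersec}.

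The main obstacle I anticipate is precisely this $\parallel$-case: one must simultaneously track the $\parr$-decomposition induced by $\parallel_I$, the re-distribution of the argument premises, the matching of contexts $\Gamma_i$, and the measure count, and it is crucial that all these align to yield exactly $|\pi'|=|\pi|-1$. Every other case is essentially bookkeeping built on top of Lemmas~\ref{lem:values}, \ref{lem:intersec} and~\ref{lem:substitution}.
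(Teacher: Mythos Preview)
Your proposal is correct and follows essentially the same approach as the paper: induction on the reduction derivation, using Lemma~\ref{lem:values} to collapse the $\parr$-indices in the $\beta_v$ and $V(M\parallel N)$ cases, Lemma~\ref{lem:intersec} to split the tensor type of the value, and Lemma~\ref{lem:substitution} for the $\beta_v$ case, with the same arithmetic showing the measure drops by exactly $1$. The only cosmetic difference is that the paper spells out the $V(M\parallel N)$ case while you spell out the dual $(M\parallel N)P$ case, but the two arguments are indeed symmetric as you note.
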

\begin{proof}
 We proceed by induction on the length of the derivation of $M\to N$. 
 We only treat the most interesting cases.
 \begin{itemize}
  \item $(\lambda x.M')V\to M'[V/x]$.
  	  Then, the last rule of $\pi$ is a $\multimap_E$-rule with $k+1$ premises, say $\pi_0=\Delta'\vdash\lambda x.M':\bigparr_{i=1}^k\bigotimes_{j=1}^{n_i}(\rho_{ij}\multimap\alpha_{ij})$ and for every $i=1,\dots,k$, $\pi_{i}=\Gamma_i\vdash V:\bigparr_{j=1}^{n_i}\rho_{ij}$, with moreover $\Delta=\Delta'\otimes\bigotimes_{i=1}^k\Gamma_i$, $\alpha=\bigparr_{i=1}^k\bigparr_{j=1}^{n_i}\alpha_{ij}$,
	  and $\mes{\pi} = \sum_{i=0}^k \mes{\pi_i} + (\sum_{i=1}^k 2n_i) -1$.
	  However, since Lemma~\ref{lem:values} entails that $k=n_1=1$ we get $\mes{\pi} = \mes{\pi_0} + \mes{\pi_1} + 1$.
   In addition, the only possibility for $\pi_0$ is to come from $\pi'_0=\Delta',x:\rho\vdash M':\alpha$, where $|\pi_0|=|\pi'_0|$. By Lemma~\ref{lem:substitution},
   $\pi'=\Delta'\otimes\Gamma\vdash M'[V/x]:\alpha$, where
   $|\pi'|
   =|\pi'_0|+|\pi_1|
   =|\pi_0|+|\pi_1|
   =|\pi|-1$.
   We conclude since $\Delta'\otimes\Gamma=\Delta$.

  \item Let $V(M\parallel N)\to VM\parallel VN$. Then
  $\pi=\Delta\otimes\bigotimes_{i=1}^k\Gamma_i\vdash V(M\parallel N):\bigparr_{i=1}^k\bigparr_{j=1}^{n_i}\alpha_{ij}$ ends in a $\multimap_E$ rule having as premises
  $\pi_0=\Delta\vdash V:\bigparr_{i=1}^k\bigotimes_{j=1}^{n_i}(\rho_{ij}\multimap\alpha_{ij})$
  and, for $i=1,\dots,k$,
  $\pi_i=\Gamma_j\vdash M\parallel N:\bigparr_{j=1}^{n_i}\rho_{ij}$. 
  Thus, we have $|\pi|=\sum_{j=0}^k|\pi_i|+(\sum_{i=1}^k2n_i)-1$. However, by Lemma~\ref{lem:values}, $k=1$, so we omit the index $i$ where it is not needed, and $|\pi|=|\pi_0|+|\pi_1|+2n-1$.
  Then
  $\pi_1^1=\Gamma_1\vdash M:\bigparr_{j\in S}\rho_j$
  and
  $\pi_1^2=\Gamma_2\vdash N:\bigparr_{j\in\bar{S}}\rho_j$,
  where $\Gamma=\Gamma_1\otimes\Gamma_2$, $\emptyset\neq S\subsetneq\{1,\dots,k\}$ and $\bar{S}=\{1,\dots,k\}\setminus S$
  with $|\pi_1|=|\pi_1^1|+|\pi_1^2|$. 
    By Lemma~\ref{lem:intersec}, we can split $\pi_0$ into two derivations,
  $\pi_0^S=\bigotimes_{j\in S}\Delta_j\vdash V:\bigotimes_{j\in S}(\rho_j\multimap\alpha_j)$
  and
  $\pi_0^{\bar{S}}=\bigotimes_{j\in\bar{S}}\Delta_j\vdash V:\bigotimes_{j\in S}(\rho_j\multimap\alpha_j)$,
  with $|\pi_0^S|+|\pi_0^{\bar{S}}|=|\pi_0|$.
  By rule $\multimap_E$, we have
  $\pi^1=\bigotimes_{j\in S}\Delta_j\otimes\Gamma_1\vdash VM:\bigparr_{j\in S}\alpha_j$
  and
  $\pi^2=\bigotimes_{j\in\bar{S}}\Delta_j\otimes\Gamma_2\vdash VN:\bigparr_{j\in\bar{S}}\alpha_j$,
  where $|\pi^1|=|\pi_0^S|+|\pi_1^1|+2\sharp S-1$, and
  $|\pi^2|=|\pi_0^{\bar{S}}|+|\pi_1^2|+2\sharp\bar{S}-1$.
  By rule $\parallel_I$,
  $\pi'=\bigotimes_{j=1}^n\Delta_i\otimes\Gamma_1\otimes\Gamma_2\vdash VM\parallel VN:\bigparr_{j=1}^n\alpha_j$, where
  $|\pi'|
  =|\pi^1|+|\pi^2|
  =(|\pi_0^S|+|\pi_1^1|+2\sharp S-1)+(|\pi_0^{\bar{S}}|+|\pi_1^2|+2\sharp\bar{S}-1)
  =|\pi_0|+|\pi_1|+2\sharp S+2\sharp\bar{S}-2
  =|\pi_0|+|\pi_1|+2n-2
  =|\pi|-1$.\qed
 \end{itemize}
\end{proof}

%
\subsection{Subject Expansion}\label{subsec:SE}

The proof of the fact that our system enjoys subject expansion follows by straightforward induction, once one has proved 
the commutation of abstraction with abstraction, application, non-deterministic choice and parallel composition. 



\begin{theorem}[Subject expansion]\label{thm:SE} If $M\to N$ and $\pi=\, \Delta\vdash N:\alpha$, then there is $\pi'=\, \Delta\vdash M:\alpha$, such that $\mes{\pi'}=\mes\pi+1$.
\end{theorem}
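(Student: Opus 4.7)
The plan is to proceed by induction on the derivation of $M\to N$, with the contextual rules handled transparently by the induction hypothesis: if the reduction happens inside a subterm, the typing of the outer construct (application, $\parallel_I$) is reused unchanged and the weight increases by exactly the amount given by the IH, which is $1$. So the heart of the argument is the four base reduction cases.

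The $+$-cases are immediate: given $\pi = \Delta \vdash M:\alpha$, one application of the $\plusL$ rule yields $\pi' = \Delta \vdash M+N:\alpha$, and by definition $\mes{\pi'} = \mes{\pi}+1$; the case $M+N\to N$ is symmetric. The two $\parallel$-distribution cases follow the same pattern: the typing $\pi$ of the contractum ends in a $\parallel_I$ whose two premises end in $\multimap_E$ rules, and one merges the two $\multimap_E$s back into a single one whose function-premise is a $\parallel_I$. For $V(M\parallel N)\to VM\parallel VN$, Lemma~\ref{lem:values} forces both $\multimap_E$ premises to have $k=1$, and the two tensor-typings of $V$ assigned over a partition $S\uplus\bar S = \{1,\dots,n\}$ are glued back via the converse direction of Lemma~\ref{lem:intersec}; this produces a single $\multimap_E$ of weight $2n-1$ in place of two $\multimap_E$s of combined weight $(2\sharp S-1)+(2\sharp\bar S-1) = 2n-2$, giving $\mes{\pi'}=\mes{\pi}+1$. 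For $(M\parallel N)P\to MP\parallel NP$ the argument is identical except that $M\parallel N$ is in function position and Lemma~\ref{lem:values} does not restrict $k$, so the reconstructed $\multimap_E$ has $k = k_M+k_N$; the weights still add so that $\sum_i 2n_i - 1$ for the merged rule is exactly one more than the sum of the two separate $(\sum_i 2n_i) - 1$ contributions.

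The real work is the $\beta_v$-case $(\lambda x.M')V\to M'[V/x]$, which requires a converse to Lemma~\ref{lem:substitution}, a \emph{reverse substitution lemma}: if $\pi = \Delta \vdash M'[V/x]:\alpha$, then $\Delta = \Delta'\otimes\Gamma$ for some splitting and there exist a type $\tau$, a derivation $\pi_0 = \Delta',x:\tau\vdash M':\alpha$ and a derivation $\pi_V = \Gamma\vdash V:\tau$ with $\mes{\pi} = \mes{\pi_0}+\mes{\pi_V}$. Granted this lemma, an application of $\multimap_I$ (with $n=1$) produces $\Delta'\vdash \lambda x.M':\tau\multimap\alpha$ of weight $\mes{\pi_0}$, and then one $\multimap_E$ (with $k=n_1=1$) assembles $\pi' = \Delta\vdash (\lambda x.M')V:\alpha$ with $\mes{\pi'} = \mes{\pi_0}+\mes{\pi_V}+2\cdot 1-1 = \mes{\pi}+1$. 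The reverse substitution lemma itself is proved by structural induction on $M'$: every axiom-leaf on $x$ in $\pi$ has been replaced by an independent subderivation typing $V$, and these possibly many subderivations are re-collected into a single typing $\Gamma\vdash V:\tau$ using Lemma~\ref{lem:intersec} in the other direction; the non-idempotency of $\otimes$ makes this transparent as the types accumulate into a tensor and the measures add. This merging step is where I expect the bookkeeping to be the most delicate, but it is the only genuine obstacle — the rest of the proof is weight arithmetic.
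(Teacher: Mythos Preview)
Your proposal is correct. The contextual cases, the $+$-cases, and both $\parallel$-distribution cases are handled exactly as in the paper's subject-reduction argument run in reverse, and your weight arithmetic is right in each of them.

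The $\beta_v$-case is where you genuinely diverge from the paper. You isolate a \emph{reverse substitution lemma}: from any derivation of $\Delta\vdash M'[V/x]:\alpha$ one can extract $\Delta',x:\tau\vdash M':\alpha$ and $\Gamma\vdash V:\tau$ with $\Delta=\Delta'\otimes\Gamma$ and the measures adding up, after which a single $\multimap_I$ (with $n=1$) followed by a single $\multimap_E$ (with $k=n_1=1$) rebuilds the redex with weight $+1$. The paper instead keeps $V$ attached throughout and proceeds by structural induction on $M'$ directly on the target statement, proving one \emph{commutation claim} per constructor (the text before the theorem announces ``commutation of abstraction with abstraction, application, non-deterministic choice and parallel composition''); the displayed claim is the application case, transforming a typing of $((\lambda x.P)V)((\lambda x.Q)V)$ into one of $(\lambda x.PQ)V$ while dropping exactly $k$ from the measure, and the induction hypothesis is invoked on the smaller redexes $(\lambda x.P)V\to P[V/x]$ and $(\lambda x.Q)V\to Q[V/x]$.

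Both routes do the same work under the hood (a structural induction on $M'$ using Lemma~\ref{lem:intersec} to merge the several typings of $V$), but your packaging is cleaner: the anti-substitution lemma is a standalone, reusable statement and the final assembly is a two-rule one-liner, whereas the paper's commutation claims are tailored to the theorem and require tracking the parameter $k$ through the $\multimap_E$ rule. The paper's organisation, on the other hand, makes the measure bookkeeping slightly more visible at each step, since the $-k$ in the claim cancels the $+k+1$ coming from the $k+1$ uses of the IH.
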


\begin{proof}
  By induction on the length of the derivation of $M\to N$, splitting into cases depending on its last rule. We only consider the most interesting case,
  i.e.\ $(\lambda x.M')V\to M'[V/x]$ where $M' = PQ$. One first needs to establish, by induction on $\pi$, a claim about the commutation of abstraction with application.
  
  \begin{claim} If $\pi=\,\Delta\vdash((\lambda x.P)V)((\lambda x.Q)V):\alpha$, where the last rule of $\pi$ is a $\multimap_E$ rule having $k+1$ premises, then there exists $\pi'=\,\Delta\vdash(\lambda x.PQ)V:\alpha$ such that $\mes{\pi'}=\mes{\pi}-k$.
  \end{claim}
  
  By definition we have $N=(PQ)[V/x]=P[V/x]Q[V/x]$. 
  So, $\pi=\,\Delta\vdash N:\alpha$ ends in a $\multimap_E$-rule with $k+1$ premises $\pi_0=\,\Delta'\vdash P[V/x]:\bigparr_{i=1}^k\bigotimes_{j=1}^{n_i}{(\tau_{ij}\multimap\alpha_{ij})}$ and $\pi_i=\,\Gamma_i\vdash Q[V/x]:\bigparr_{j=1}^{n_i}\tau_{ij}$ for $i=1,\dots, k$, with $\Delta=\Delta'\otimes\bigotimes_{i=1}^k\Gamma_i$, $\alpha=\bigparr_{i=1}^k\bigparr_{j=1}^{n_i}\alpha_{ij}$ and $\mes{\pi}=\sum_{i=0}^k\pi_i+(\sum_{i=1}^k2n_i)-1$.
		Then, by the induction hypothesis, we get $\pi_0'=\,\Delta'\vdash(\lambda x.P)V:\bigparr_{i=1}^k\bigotimes_{j=1}^{n_i}(\tau_{ij}\multimap\alpha_{ij})$, and
		$\pi_i'=\,\Gamma_i\vdash(\lambda x.Q)V:\bigparr_{j=1}^{n_i}\tau_{ij}$, with $\mes{\pi_i'}=\mes{\pi_i}+1$.
		Hence by rule $\multimap_E$ we obtain $\pi''=\,\Delta'\otimes\bigotimes_{i=1}^k\Gamma_i\vdash((\lambda x.P)V)((\lambda x.Q)V):\bigparr_{i=1}^k\bigparr_{j=1}^{n_i}\alpha_{ij}$, with $\mes{\pi''}=\sum_{i=0}^k\mes{\pi_i'}+(\sum_{i=1}^k2n_i)-1$. By the above claim, we get $\pi'=\,\Delta'\otimes\bigotimes_{i=1}^k\Gamma_i\vdash(\lambda x.PQ)V:\bigparr_{i=1}^k\bigparr_{j=1}^{n_i}\alpha_{ij}$ such that $\mes{\pi'}=\mes{\pi''}-k=\mes{\pi}+1$. \qed

\end{proof}

\subsection{Convergence}\label{subsec:Convergence}

From our ``quantitative'' versions of subject reduction and subject expansion one easily obtains that
our type system captures exactly the weakly normalising terms, and that the size $\mes{\pi}$ of a derivation tree $\pi =\ \vdash M : \alpha$ decreases along the reduction of $M$.
However, when $\alpha$ satisfies in addition a suitable minimality condition (namely the fact that $\alpha $ is of shape $ \one\parr\cdots\parr\one$), then we can be more precise and say that there exists a reduction from $M$ to a normal form,
having length \emph{exactly} $\mes{\pi}$.

In the following $\parr^k \one$, with $k > 0$,  stands for $\one\parr\cdots\parr\one$ ($k$ times).

\begin{theorem}\label{thm:Convergence}
Let $M$ be a closed term, and $k> 0$. 
There is a typing tree $\pi$ for $\vdash M:\parr^k\one$ iff there are values $V_1, \dots, V_k$ and a reduction $M\msto V_1\parallel\cdots\parallel V_k$ of length $\mes{\pi}$.
%
%
\end{theorem}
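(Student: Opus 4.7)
The plan is to prove both implications by induction, leveraging the quantitative versions of Subject Reduction (Theorem~\ref{thm:SR}) and Subject Expansion (Theorem~\ref{thm:SE}). The $(\Leftarrow)$ direction is short: given a reduction $M \msto V_1 \parallel \cdots \parallel V_k$ of length $n$, Remark~\ref{rmk:valuesTop} supplies a canonical derivation $\pi_0$ of $\vdash V_1 \parallel \cdots \parallel V_k : \parr^k\one$ built exclusively from $ax$, zero-premise $\multimap_I$, and $\parallel_I$ rules, all of which contribute zero to the measure, so $\mes{\pi_0}=0$. Iterating Subject Expansion backwards along the $n$ reduction steps preserves the concluding type and increases the measure by exactly one at each step, producing a derivation $\pi$ of $\vdash M:\parr^k\one$ with $\mes{\pi}=n$.

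For the $(\Rightarrow)$ direction, I would induct on $\mes{\pi}$. In the base case $\mes{\pi}=0$, $M$ must be a parallel composition of values: otherwise, since $M$ is closed, the remark following Definition~\ref{def:converging} yields a reduction step $M \to M'$, and either clause of Subject Reduction would produce a typed contractum of measure $\mes{\pi}-1=-1$, which is absurd. Writing $M = V_1 \parallel \cdots \parallel V_j$, Lemma~\ref{lem:values} assigns each $V_i$ a computational type $\tau_i$, and the $\parallel_I$-rules combine these into the $\parr$-multiset $[\tau_1,\dots,\tau_j]$; matching this against $\parr^k\one = [\one,\dots,\one]$ (with each $\tau_i$ computational and therefore unable to be a nontrivial $\parr$) forces $j=k$ and each $\tau_i=\one$, and the empty reduction has the required length $0$. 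In the inductive step $\mes{\pi}>0$, the same contrapositive shows $M$ contains a redex. Reducing a $\beta_v$- or $\parallel$-redex gives $M \to M'$ with a derivation $\pi'$ of $\vdash M':\parr^k\one$ and $\mes{\pi'}=\mes{\pi}-1$ by Subject Reduction; the induction hypothesis supplies $M' \msto V_1 \parallel \cdots \parallel V_k$ of length $\mes{\pi'}$, and prepending $M \to M'$ yields the desired length-$\mes{\pi}$ reduction. For a $+$-redex with alternative contracta $N_1, N_2$, the disjunctive form of Subject Reduction guarantees one of them, say $N_i$, admits a typing with measure $\mes{\pi}-1$, and we take the reduction $M \to N_i$ and close by the induction hypothesis.

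The main obstacle is the bookkeeping in the base case $\mes{\pi}=0$: one must justify both that such a derivation can type only a normal form and that the type $\parr^k\one$ rigidly forces the arity of the parallel composition to be exactly $k$. The contrapositive via Subject Reduction elegantly avoids a direct structural analysis of $M$, converting a structural constraint into a simple arithmetic impossibility. A secondary subtlety in the inductive step is that for $+$-redexes we cannot pick the reduction branch freely --- only one of the two alternatives preserves typability --- but this is exactly what the disjunctive form of Subject Reduction guarantees.
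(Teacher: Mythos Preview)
Your proof is correct and follows essentially the same approach as the paper: induction on $\mes{\pi}$ combined with Subject Reduction for $(\Rightarrow)$, and iterated Subject Expansion from Remark~\ref{rmk:valuesTop} for $(\Leftarrow)$. One small wrinkle: in the inductive step you invoke ``the same contrapositive'' to conclude that $M$ reduces, but what you actually need there is the implication ``$M$ a parallel composition of closed values with type $\parr^k\one$ $\Rightarrow$ $\mes{\pi}=0$'' (since each $\vdash V_i:\one$ forces a premise-free $\multimap_I$), which is the converse of what your base case establishes---the paper sidesteps this by splitting on the shape of $M$ rather than on $\mes{\pi}$, but your organization works once this easy observation is made explicit.
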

\begin{proof}~
$(\Rightarrow)$ Suppose $\pi=\ \vdash M: \parr^k \one$.
	We proceed by induction on $\mes{\pi}$.  If $M = V_1\parallel\cdots\parallel V_{k'}$, then $\pi$ must start with a tree of $k'-1$ rules $\parallel$, and then $k'$ rules $\multimap_I$ with conclusion, respectively, $\ \vdash V_1:\one,\dots, \ \vdash V_{k'}:\one$. We then have $k=k'$, and $M$ trivially converges to $V_1\parallel\cdots\parallel V_{k'}$ in $\mes{\pi}=0$ steps.

Otherwise, since $M$ is closed, there exists $N$ such that $M\to N$.
By Theorem~\ref{thm:SR}, 
such an $N$ can be chosen in such a way $\pi'=\ \vdash N:\parr^k \one$, with $\mes{\pi'} = \mes{\pi}-1$.
From the induction hypothesis we know that $N$ converges in $\mes{\pi'}$ steps to $V_1\parallel\cdots\parallel V_k$. Therefore, $M$ converges in $\mes{\pi'}+1=\mes{\pi}$ steps to $V_1\parallel\cdots\parallel V_k$.

$(\Leftarrow)$ Suppose that $M\msto V_1\parallel\cdots\parallel V_k$.
By Remark~\ref{rmk:valuesTop}, there is $\pi=\ \vdash V_1\parallel\dots\parallel V_k : \parr^k \one$ and $\mes{\pi}=0$.
Therefore, by the subject expansion (Theorem~\ref{thm:SE}) there is $\pi'=\ \vdash M:  \parr^k \one$ and $\mes{\pi'}$ is equal to the length of the reduction $M\msto V_1\parallel\cdots\parallel V_k$. \qed
%
\end{proof}


\begin{corollary} Let $M$ be closed, then $M$ is typable if and only if $M$ converges.
\end{corollary}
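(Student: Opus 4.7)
The plan is to obtain both directions as essentially immediate consequences of Theorem~\ref{thm:Convergence}, combined with subject reduction (Theorem~\ref{thm:SR}) and the observation (made at the end of Section~\ref{sec:cbv-nd-machine}) that a closed normal form is precisely a parallel composition of values.

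For the easy direction $(\Leftarrow)$, assume $M$ converges. By Definition~\ref{def:converging} there exist values $V_1,\dots,V_k$ with $M\msto V_1\parallel\cdots\parallel V_k$. Theorem~\ref{thm:Convergence} then yields a derivation $\vdash M:\parr^k\one$, so $M$ is typable. (Alternatively, Remark~\ref{rmk:valuesTop} types the reduct with $\parr^k\one$ and subject expansion, Theorem~\ref{thm:SE}, lifts this to $M$.)

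For the direction $(\Rightarrow)$, assume there is a derivation $\pi=\,\vdash M:\alpha$. I will construct a finite reduction $M=M_0\to M_1\to\cdots\to M_n$ ending in a normal form, by induction on $\mes{\pi}$. If $M$ is already a normal form, then being closed forces it to be a parallel composition of values (as argued just before Definition~\ref{def:converging}), so $M$ converges trivially. Otherwise there exists $N$ with $M\to N$. By Theorem~\ref{thm:SR}, one can choose (in the $+$-case, picking the appropriate branch among the two contracta) a reduct $M_1$ together with a derivation $\pi_1=\,\vdash M_1:\alpha$ satisfying $\mes{\pi_1}=\mes{\pi}-1$. Iterating, the sequence of measures strictly decreases in $\nat$, so after at most $\mes{\pi}$ steps we reach a normal form which, being closed, is a parallel composition of values. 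Hence $M$ converges.

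The only mildly delicate point is making sure that at every non-normal step one of the available reductions is a \emph{typed} reduction, which is exactly the content of Theorem~\ref{thm:SR} (the second bullet handling the non-deterministic $+$-case by guaranteeing that at least one of the two contracta retains the type). Everything else is book-keeping: the termination argument is just that $\mes{\cdot}\in\nat$ strictly decreases, and the characterisation of closed normal forms is already established in Section~\ref{sec:cbv-nd-machine}. No new lemmas are needed.
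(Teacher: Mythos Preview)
Your proof is correct and matches the paper's intent. The paper states the corollary without proof, leaving it as an immediate consequence of Theorem~\ref{thm:Convergence}; your $(\Leftarrow)$ direction is literally that theorem, and your $(\Rightarrow)$ direction simply replays the $(\Rightarrow)$ argument of Theorem~\ref{thm:Convergence} with an arbitrary $\alpha$ in place of $\parr^k\one$, which is exactly the expected reading since that induction on $\mes{\pi}$ never uses the specific shape of the type except at the base case (which you correctly replace by the observation that closed normal forms are parallel compositions of values).
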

\section{Adequacy and (Lack of) Full Abstraction}\label{sec:NoFA}

The choice of presenting a model through a type discipline or a reflexive object is
more a matter of taste rather than a technical decision. (Compare for instance the type system of \cite{PaganiR10} and the interpretation of \cite{BucciarelliEM12}).
The model $\mathcal{V}$ associated with our type system lives in the category {\bf Rel} of sets and relations (refer to \cite{Ehrhard12} for more details) and is defined by $\mathcal{V} = \bigcup_{n\in\nat}\mathcal{V}_n$, with
\[
	\mathcal{V}_0 = \emptyset,\qquad \mathcal{V}_{n+1} = \Mfin{\mathcal{V}_n}\times\Mfin{\Mfin{\mathcal{V}_n}},
\]
where $\Mfin{X}$ denotes the set of finite multisets over a set $X$. 
In fact, $\Mfin{X}$ interprets in {\bf Rel} the exponentials $\oc X$ and $\wn X$, whilst the cartesian product is the linear implication $\multimap$, so that $\mathcal{V}$ is the minimal solution of the equation $\mathcal{V}\simeq\oc \mathcal{V} \multimap\wn\oc \mathcal{V}$. Recalling Equation \ref{eq:intro_types} in the introduction, this means that the object $\mathcal{V}$ represents ``value types'',  while computational types $\LTypes$ will be represented by elements of $\mathcal{C} = \oc \mathcal{V}= \Mfin{\mathcal{V}}$ and parallel-types $\Types$ as elements of $\mathcal{T} = \wn \mathcal{C} = \Mfin{\mathcal{C}}$.
This intuition can be formalized by defining two injections $(\cdot)^\circ : \Types\to\mathcal{T}$ and $(\cdot)^\bullet : \LTypes\to\mathcal{C}$ by mutual induction,
as follows: $\tau^\circ = [\tau^\bullet]$, $(\alpha\parr \beta)^\circ = \alpha^\circ\mcup \beta^\circ$, $\one^\bullet = []$, $(\tau\otimes\rho)^\bullet = \tau^\bullet\mcup\rho^\bullet$ and $(\tau\multimap\alpha)^\bullet = [(\tau^\bullet,\alpha^\circ)]$. 

It is beyond the scope of the present paper to give the explicit inductive definition of the interpretation of terms. 
For our purpose it is enough to know that such an interpretation can be characterised (up to isomorphism) as follows.

\begin{definition} The \emph{interpretation of a closed term $M$} is defined by $\Int{M} = \{ \alpha \mid \ \vdash M : \alpha \}\subseteq\Types$.
\end{definition}


The interpretations of terms are naturally ordered by set-theoretical inclusion;
an interesting problem is to determine whether there is a relationship between
this ordering and the following observational preorder on terms.

\begin{definition}[Observational preorder] Let $M, N\in\Terms$ be closed.
We set $M \sqle N$ iff for all closed terms $\vec P$, $M\vec P$ converges implies that $N\vec P$ converges.
\end{definition}

A model is called \emph{adequate} if $\Int{M}\subseteq \Int{N}$ entails $M\sqle N$;
it is called \emph{fully abstract} if in addition the converse holds.

The adequacy of the model $\mathcal{V}$ follows easily from Theorem~\ref{thm:Convergence}
and the monotonicity of the interpretation.

\begin{corollary}[Adequacy] For all $M,N$ closed, if $\Int{M}\subseteq \Int{N}$ then $M\sqle N$.
\end{corollary}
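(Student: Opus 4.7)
The strategy is to reduce the statement to Theorem~\ref{thm:Convergence} by means of an auxiliary monotonicity lemma for applicative contexts. Concretely, I claim the following:

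\medskip
\noindent\emph{Lemma (Monotonicity).} For all closed $M, N$ such that $\Int{M}\subseteq\Int{N}$, all finite sequences $\vec P = P_1,\dots,P_n$ of closed terms, and all $\alpha\in\Types$: if $\ \vdash M\vec P : \alpha$, then $\ \vdash N\vec P : \alpha$.

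\medskip
Granting the lemma, adequacy follows at once. Indeed, assume $\Int{M}\subseteq\Int{N}$ and that $M\vec P$ converges. By Definition~\ref{def:converging}, $M\vec P\msto V_1\parallel\cdots\parallel V_k$ for some values $V_i$ and some $k>0$. By the $(\Leftarrow)$ direction of Theorem~\ref{thm:Convergence} there exists a derivation $\ \vdash M\vec P:\parr^k\one$. The lemma produces $\ \vdash N\vec P:\parr^k\one$, and the $(\Rightarrow)$ direction of Theorem~\ref{thm:Convergence} then gives a reduction $N\vec P\msto W_1\parallel\cdots\parallel W_k$, so $N\vec P$ converges. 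Hence $M\sqle N$.

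For the lemma I would argue by induction on $n$. The base case $n=0$ is just the definition of $\Int{\cdot}$: $\ \vdash M:\alpha$ means $\alpha\in\Int{M}\subseteq\Int{N}$, so $\ \vdash N:\alpha$. In the inductive step, any derivation of $\ \vdash M P_1\cdots P_n:\alpha$ must end with a $\multimap_E$ rule whose leftmost (principal) premise has the shape $\ \vdash M P_1\cdots P_{n-1}:\gamma$ for a suitable type $\gamma$ of the form $\bigparr_{i=1}^k\bigotimes_{j=1}^{n_i}(\tau_{ij}\multimap\alpha_{ij})$, the remaining premises typing $P_n$. By the induction hypothesis applied to $P_1,\dots,P_{n-1}$ we get $\ \vdash N P_1\cdots P_{n-1}:\gamma$, and re-applying the same $\multimap_E$ rule with the same derivations of $P_n$ on the right yields the desired derivation $\ \vdash N P_1\cdots P_n:\alpha$.

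The proof is thus a two-line argument once the monotonicity lemma is in place, and the lemma itself is routine: the only mildly delicate point is noting that the $\multimap_E$ rule is \emph{modular in its principal premise}, so one really can swap the sub-derivation for $M P_1\cdots P_{n-1}$ for a sub-derivation for $N P_1\cdots P_{n-1}$ with the same concluding type, without touching the right premises. There is no genuine obstacle here, which is consistent with the remark in the paper that adequacy follows easily from Theorem~\ref{thm:Convergence} plus monotonicity.
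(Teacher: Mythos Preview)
Your argument is correct and matches the paper's own approach: the paper simply states that adequacy follows from Theorem~\ref{thm:Convergence} together with the monotonicity of the interpretation, and your lemma makes this monotonicity precise and proves it by the obvious induction on the length of $\vec P$. The only point worth double-checking (which you handle correctly) is that a derivation of $\ \vdash (M\vec P)P_n:\alpha$ necessarily ends in $\multimap_E$ with principal premise typing $M\vec P$, which holds because an application can only be typed by $\multimap_E$.
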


On the contrary, $\mathcal{V}$ is not fully abstract.
This is due to the fact that the call-by-value \lam-calculus admits the creation of an `ogre' that is able to `eat' any finite sequence of arguments and converge,
constituting then a top of the call-by-value observational preorder. 
Following \cite{BoudolIC94}, we define the ogre as 
$
	\ogre = \halfogre\halfogre$ where $\halfogre =  \lam xy.xx.
$
The ogre $\ogre$ converges since $\ogre\to\lam y.\ogre$ and remains convergent when applied to every sequence of 
values,
by discarding them one at time.

\begin{lemma} For all closed terms $M$ we have $M\sqle \ogre$.
\end{lemma}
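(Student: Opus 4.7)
The plan is to show that for any sequence $\vec{P}=P_1,\ldots,P_n$ of closed terms with $M\vec{P}$ converging, we also have $\ogre\vec{P}$ converging. First I would extract from the convergence of $M\vec P$ the convergence of each individual argument $P_i$. Indeed, by the corollary to Theorem~\ref{thm:Convergence}, $M\vec P$ is typable; since the derivation tree of $\vdash M\vec P:\alpha$ necessarily ends in a spine of $n$ instances of $\multimap_E$ (one per application), each $P_i$ appears as a premise typed by some sequent $\vdash P_i:\bigparr_j\tau_{ij}$ and so is itself typable, hence converges to a parallel composition of values.

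The heart of the proof is then the following lemma: \emph{if $P_1,\ldots,P_n$ are closed terms each converging, then $\ogre P_1\ldots P_n$ converges}. I would prove this by induction on $n$. The base case $n=0$ is immediate from the paper's observation that $\ogre\to\lambda y.\ogre$. For the inductive step, I would first reduce $\ogre$ in the leftmost position (using the contextual application rule, whose side condition is met since $\ogre$ itself is not a parallel composition) to obtain $(\lambda y.\ogre)P_1\ldots P_n$, then reduce $P_1$ in the context $(\lambda y.\ogre)[\,\cdot\,]P_2\ldots P_n$ to obtain $(\lambda y.\ogre)(V_1\parallel\cdots\parallel V_k)P_2\ldots P_n$ for some values $V_j$. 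Applying the rule $V(M\parallel N)\to VM\parallel VN$ repeatedly to peel off each $V_j$, followed by $\beta_v$-reducing each $(\lambda y.\ogre)V_j\to\ogre$, yields $(\ogre\parallel\cdots\parallel\ogre)P_2\ldots P_n$ with $k$ copies. The rule $(M\parallel N)P\to MP\parallel NP$ then distributes each remaining argument $P_j$ over the parallel composition, producing ultimately $\ogre P_2\ldots P_n\parallel\cdots\parallel\ogre P_2\ldots P_n$ ($k$ copies). By the induction hypothesis each summand converges, hence so does the whole parallel composition, completing the step.

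Combining the two parts yields $M\sqle\ogre$. The main technical care is in the bookkeeping of the contextual reduction rules: the side condition $(*)$ forbids reducing inside an application whose left-hand side has become a parallel composition, so we must distribute arguments over parallel compositions \emph{before} reducing within them; moreover, the non-associativity of $\parallel$ requires unfolding the rule $V(M\parallel N)\to VM\parallel VN$ one layer at a time when the argument has more than two parallel components. Neither point poses a conceptual difficulty, but each must be spelled out with some care, particularly if one wants to state the inductive hypothesis in a form uniform over all shapes of the parallel normal forms of the $P_i$'s.
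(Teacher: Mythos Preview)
Your proposal is correct and follows the same two-step structure as the paper: first argue that convergence of $M\vec P$ forces each $P_i$ to converge, then show that $\ogre\vec P$ converges whenever all the $P_i$ do (the paper sketches this exact reduction in one line, and your explicit induction on $n$ simply unpacks it). The only minor difference is that you justify the first step via typability and the corollary to Theorem~\ref{thm:Convergence}, whereas the paper asserts it as ``easy to check''; your route is arguably cleaner, but the overall argument is the same.
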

\begin{proof}
Given a term $M$ and a sequence $\vec P = P_1\cdots P_k$ of closed terms it is easy to check that $M\vec P$ can converge only when $\vec P$ converges.
In that case we have $\ogre \vec P\msto (\lam y.\ogre) (V_1\parallel\cdots\parallel V_n)P_2\cdots P_k\msto \ogre P_2\cdots P_k\parallel\cdots\parallel\ogre P_2\cdots P_k   \msto \lam y.\ogre\parallel\cdots\parallel \lam y.\ogre$.
Therefore $\ogre$ is maximal with respect to $\sqle$.\qed
\end{proof}

It is easy to check that $\one$ and $(\one\multimap\one)\otimes(\one\multimap(\one\multimap\one))$ are valid types for $\ogre$, and thus belong to its interpretation.
The following lemma gives a precise characterisation of $\Int{\ogre}$.

\begin{lemma} $\alpha\in\Int{\ogre}$ iff $\alpha=\bigotimes_{i=0}^n(\one\multimap\alpha_i)$ with $n\geq 0$ and $\alpha_i\in\Int{\ogre}$ for all $i\leq n$. 
In particular, we have that $\Int{\mathbf{I}}\not\subseteq\Int{\ogre}$.
\end{lemma}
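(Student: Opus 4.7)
The plan is to reduce the analysis to a value, via subject reduction/expansion, and then read off the possible types from the $\multimap_I$ rule.

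First, I would note that $\ogre=(\lam xy.xx)\halfogre \to \lam y.\halfogre\halfogre = \lam y.\ogre$ by a $\beta_v$-step (not involving $+$). By Theorem~\ref{thm:SR} and Theorem~\ref{thm:SE}, this gives $\Int{\ogre}=\Int{\lam y.\ogre}$. Now $\lam y.\ogre$ is a value, so by Lemma~\ref{lem:values} every $\alpha\in\Int{\lam y.\ogre}$ lies in $\LTypes$. Such a derivation must end with a $\multimap_I$ rule, so by Lemma~\ref{lem:intersec} we can write $\alpha=\bigotimes_{i}(\tau_i\multimap\beta_i)$ with sub-derivations $y:\tau_i\vdash \ogre:\beta_i$ for each $i$.

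The main technical point is to force $\tau_i=\one$. I would establish as an auxiliary fact, by a routine structural induction on derivations, that whenever $\Gamma\vdash M:\gamma$ and $y\notin\FV(M)$, then $\Gamma(y)=\one$: the $ax$-case is trivial, and in the other rules the context of the conclusion is the pointwise tensor of the contexts of the premises, each of which maps $y$ to $\one$ by the induction hypothesis (using $\one\otimes\cdots\otimes\one=\one$). Applying this fact to the sub-derivations $y:\tau_i\vdash\ogre:\beta_i$ (since $y\notin\FV(\ogre)$) yields $\tau_i=\one$ and $\vdash\ogre:\beta_i$, i.e.\ $\beta_i\in\Int{\ogre}$. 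This proves the forward implication. For the converse, given such $\beta_i\in\Int{\ogre}$, the corresponding derivations $\vdash\ogre:\beta_i$ combine via $\multimap_I$ to yield $\vdash\lam y.\ogre:\bigotimes_{i}(\one\multimap\beta_i)$, and one more application of Theorem~\ref{thm:SE} transfers this type back to $\ogre$.

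For the ``in particular'' part, I would exhibit an explicit type of $\mathbf{I}$ that fails the characterisation. By Example~1(1) we have $\vdash\mathbf{I}:(\one\multimap\one)\multimap(\one\multimap\one)$, hence this type belongs to $\Int{\mathbf{I}}$. On the other hand, by the characterisation just proved, every element of $\Int{\ogre}$ is a tensor whose factors all have the shape $\one\multimap(\cdot)$, so in particular a type of $\ogre$ consisting of a single $\multimap$-factor must have $\one$ on the left. Since $(\one\multimap\one)\multimap(\one\multimap\one)$ is a single $\multimap$-factor but has $\one\multimap\one\neq\one$ on its left, it cannot belong to $\Int{\ogre}$, establishing $\Int{\mathbf{I}}\not\subseteq\Int{\ogre}$.

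The only delicate step is the free-variable/context lemma, but it is a very standard induction; once it is in place, everything else is a direct reading of the $\multimap_I$ rule combined with subject reduction and expansion.
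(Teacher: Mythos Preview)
Your proof is correct and follows essentially the same route as the paper: reduce to $\lam y.\ogre$ via subject reduction/expansion, read off the shape of the type from the $\multimap_I$ rule, and use $y\notin\FV(\ogre)$ to force each $\tau_i=\one$; the same witness $(\one\multimap\one)\multimap(\one\multimap\one)$ is used for the ``in particular'' clause. The only difference is cosmetic: you spell out the free-variable/context lemma explicitly, whereas the paper invokes it tacitly in the line ``since $y\notin\FV(\ogre)$''.
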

\begin{proof} The crucial point is to remark that $\ogre\to\lambda y.\ogre$, so by Theorem~\ref{thm:SR}~and~\ref{thm:SE}, we get $\Int{\ogre}=\Int{\lambda y.\ogre}$. 
Therefore we have the following chain of equivalences:
\begin{align*}
&\alpha\in\Int{\ogre}\text{ iff }\alpha\in\Int{\lambda y.\ogre}\\
&\text{ iff }\alpha=\otimes_{i=0}^n(\tau_i\multimap\alpha_i)\in\Int{\lambda y.\ogre}&\text{by Lemma~\ref{lem:values}, $n\geq0$}\\
&\text{ iff }\alpha=\otimes_{i=0}^n(\tau_i\multimap\alpha_i)\text{ and }\forall i, \tau_i\multimap\alpha_i\in\Int{\lambda y.\ogre}&\text{by Lemma~\ref{lem:intersec}}\\
&\text{ iff }\alpha=\otimes_{i=0}^n(\tau_i\multimap\alpha_i)\text{ and }\forall i, \tau_i=\one \text{ and }\alpha_i\in\Int{\ogre}&\text{since $y\notin\FV(\ogre)$.}
\end{align*}
We have that $\Int{\mathbf{I}}\not\subseteq\Int{\ogre}$ as, for instance, ${(\one\multimap\one)\multimap(\one\multimap\one)}\in \Int{\mathbf{I}}\setminus\Int{\ogre}$.\qed
%
%
\end{proof}

Summing up, get that $\mathbf{I}\sqle \ogre$, while $\Int{\mathbf{I}}\not\subseteq\Int{\ogre}$.

%
\section{Conclusion and future work}

We introduced a call-by-value non-deterministic $\lambda$-calculus with a type system ensuring convergence.
We proved that such a type system gives a bound on the length of the lazy call-by-value reduction sequences, which is the exact length when the typing is minimal.
Finally, we show that the relational model $\mathcal{V}$ capturing our type system is adequate, but not fully abstract.

As our counterexample to full abstraction contains no non-deterministic operators, it also holds for the standard call-by-value \lam-calculus and the relational model described in~\cite{Ehrhard12}. This is a notable difference with the call-by-name case, where the relational model is proven to be fully abstract for the pure call-by-name \lam-calculus  \cite{Manzonetto09}, while other counterexamples (see~\cite{BucciarelliEM12,Breuvart}) break full abstraction in presence of may or must non-deterministic operators. 
An open problem is to find a relational model fully abstract for the call-by-value \lam-calculus.

Various fully abstract models of may and must non-determinism are known in the setting of Scott domain based semantics and idempotent intersection types. 
In particular, for the call-by-value case we mention \cite{BoudolIC94,DezaniLP98}. Comparing these models and type systems with the ones issued from the relational semantics is a research direction started in \cite{Ehrhard12} with some notable results. It would be interesting to reach a better understanding of the role played by intersection idempotency in the question of full abstraction.  

Another axis of research is to generalize our approach to study the convergence in (call-by-name and call-by-value) $\lambda$-calculi with richer algebraic structures than simply may/must non-deterministic operators, such as \cite{VauxMSCS09,ArrighiDowekRTA08}. 
In these calculi the choice operator is enriched with a weight, i.e.\ sums of terms are of the form $\alpha.M+\beta.N$, where $\alpha,\beta$ are scalars from a given semiring, pondering the choice. We would like to design type systems characterizing convergence properties in these systems. First steps have been done in \cite{ArrighiDiazcaroLMCS12,ArrighiDiazcaroValironDCM11}.

\bigskip
{\bf Acknowledgements.}
We wish to thank Thomas Ehrhard and Simona Ronchi Della Rocca for interesting discussions,
and the anonymous reviewers for their careful reading.

\bibliographystyle{splncs_srt}
\bibliography{include/biblio}
\end{document}